\tikzset{>=latex}
\tikzstyle{every loop}=[->,shorten >=1pt,looseness=8]
\tikzstyle{loop left}=[in=210,out=150,loop,left]
\tikzstyle{loop right}=[in=-30,out=30,loop,right]
\tikzstyle{state}=[circle,draw,inner sep=2pt,minimum size=.4cm]
\tikzstyle{label}=[auto,font=\small]
\tikzstyle{idempotent}=[font=\scriptsize,anchor=north east,inner sep=1pt]
\tikzstyle{dclass}=[matrix of nodes,draw,inner sep=0pt,nodes={font=\small},minimum width=20pt,minimum height=20pt]
\newcommand{\drawgrid}[3]{
            \foreach \i in {2,...,#1}
                \draw (#3-\i-1.north west) to (#3-\i-#1.north east);
            \foreach \i in {2,...,#2}
                \draw (#3-1-\i.north west) to (#3-#2-\i.south west);
            }
\newcommand{\HH}{\mathrel{\mathscr{H}}}
\newcommand{\JJ}{\mathrel{\mathscr{J}}}
\newcommand{\DD}{\mathrel{\mathscr{D}}}
\newcommand{\RR}{\mathrel{\mathscr{R}}}
\newcommand{\LL}{\mathrel{\mathscr{L}}}
\newcommand{\A}{\mathcal A}
\newcommand{\Z}{\mathbb{Z}}
\newcommand{\N}{\mathbb{N}}
\newcommand{\R}{\mathbb{R}}
\newcommand{\Q}{\mathbb{Q}}
\newcommand{\K}{\mathbb{K}}
\newcommand{\cL}{\mathcal L}
\newcommand*{\from}{\colon}
\newcommand*{\Card}{\operatorname{Card}}
\newcommand*{\Image}{\operatorname{im}}
\newcommand*{\rank}{\operatorname{rk}}
\title{Density of rational languages under shift invariant measures}
\author{Valérie Berthé}{Université Paris Cité, IRIF, CNRS}{berthe@irif.fr}{https://orcid.org/0000-0001-5561-7882}{}
\author{Herman Goulet-Ouellet}{Faculty of Information Technology, Czech Technical University in Prague}{herman.goulet.ouellet@fit.cvut.cz}{https://orcid.org/0000-0003-3445-8469}{The second author was supported by the CTU Global Postdoc Fellowship program.}
\author{Dominique Perrin}{LIGM, Université Gustave Eiffel}{dominique.perrin@esiee.fr}{https://orcid.org/0000-0003-4036-141X}{}
\authorrunning{V. Berthé, H. Goulet-Ouellet and D. Perrin}
\keywords{Automata theory, Symbolic dynamics, Semigroup theory, Ergodic theory} 
\begin{document}

\maketitle

\begin{abstract}
    We study density of rational languages under shift invariant probability measures on spaces of two-sided infinite words, which generalizes the classical notion of density studied in formal languages and automata theory.  The density for a language  is  defined as the limit in average (if it exists) of the probability that a word of a given length  belongs to the language. We establish the existence of densities for all rational languages under all shift invariant measures. We also give explicit formulas under certain conditions, in particular when the language is aperiodic. Our approach combines tools and ideas from semigroup theory and ergodic theory.
\end{abstract}

The natural density for a language $L$ on a finite alphabet $A$ is  defined as the limit
in average (if it exists) of the probability that a word of length $n$ belongs to $L$,
where letters are drawn independently with equal probabilities. This notion can be traced
back to the work of Berstel~\cite{Berstel1972}, who took inspiration from
Schützenberger~\cite{Schutzenberger1965}. It has been widely studied in the context of automata theory, logic and the theory of codes~\cite{BerstelPerrinReutenauer2009,book/Salomaa1978,Lynch1993,Hansel1983,Hansel1989,Bodirsky2004,Sinya2015,Koga2019,Kozik2005}, and also appears in ergodic theory, for instance in the work of Veech~\cite{Veech1969,Veech1975}. Densities have also been studied in the more general case where words are drawn with a Bernoulli measure, i.e.\ letters are drawn independently with possibly different probabilities~\cite{BerstelPerrinReutenauer2009}.

This paper is related with recent work on the density of group languages~\cite{BertheGouletOuelletNybergBroddaPerrinPetersen2024} using a more general notion of density motivated by symbolic dynamics. Let $\mu$ be a probability measure on the space $A^\Z$ of two-sided infinite words on the alphabet $A$. Then the density of a language $L$ with respect to $\mu$ is the limit
\begin{equation}
    \delta_\mu(L)=\lim_{n\to\infty}\frac{1}{n}\sum_{i=0}^{n-1}\mu(\{x\in A^\Z \mid x_0\dots x_{i-1}\in L\}),\label{eqDefinitionDensity}
\end{equation}
if it exists. The classical definition  discussed above corresponds to the case where $\mu$ is a Bernoulli measure. Our main result states that the density of a rational language $L$ under a shift invariant probability measure $\mu$ always exists (\cref{theoremMain}). A shift invariant measure  is a probability measure that behaves well with respect to left extensions of words (see Equation~\eqref{eq:inv}). The proof closely combines dynamical and algebraic methods. 
 
 On the dynamical side, the proof relies  mainly on  the construction of a skew product between the shift space (made of two-sided infinite words)  that supports  the measure
 $\mu$ and an $\RR$-class of the finite monoid $M$  defining  the rational language  $L$ (the transition monoid). The skew product we use is closely related to the notion of wreath product used in semigroup and automata theory~\cite[Chapter 1, Section 10]{Eilenberg1976}. We  construct a natural measure on the skew product, called the weighted counting measure, which leads to a closed formula for the density under the condition that this measure is ergodic (\cref{theoremWeightedCountingMeasure}). This generalizes known results for the case of Bernoulli measures. We also consider in \cref{theoremDensityAperiodic}   aperiodic languages (also called star-free)  for which densities are proved to exist in a  strong sense.

On the algebraic side, this construction relies crucially on the theory of Green's relations and on the key notion of the $\JJ$-class associated with a shift. Let the monoid $M$ be the image of $A^*$ by a morphism $\varphi$. The $\JJ$-class of $M$ associated with $X$ is the set of generators of the least ideal of $M$ which meets the image by $\varphi$ of the language of the shift  $X$.
Therefore, it can be called the minimal $\JJ$-class of the monoid $M$ with respect to the shift $X$.
The use of such a $\JJ$-class is a useful tool in automata theory, and appears in several
different contexts. See \cite{Colcombet2011} for a survey on the use of this idea,
originating in the proof by Sch\"utzenberger of the characterization of aperiodic languages
and \cite{PerrinSchupp1986} for an application close to the present paper.

Let us give a brief outline of the paper. In \cref{sec:dynamics} we present preliminaries on symbolic dynamics. In \cref{sec:idealsfullmonoid} we show how to calculate the densities of ideals of $A^*$. In \cref{sec:Jclass}, we study the $\JJ$-class associated with a shift space in a finite monoid and derive an explicit formula for the density of aperiodic languages (\cref{theoremDensityAperiodic}). We also  discuss connections with the notion of degree of an automaton and decidability results concerning the $\JJ$-class. In \cref{sec:existence}, we introduce the skew products and weighted counting measures, give the proof of our main result (\cref{theoremMain}), along with an explicit expression for the density when the weighted counting measure is ergodic (\cref{theoremWeightedCountingMeasure}). In \cref{sec:algebraic}, we consider algebraic properties of the density (see \cref{corollaryExtensionField}). Lastly, \cref{sec:markov} contains a discussion on the case of sofic measures, motivated by  \cref{corollaryExtensionField}.
Additionally, we provide in \cref{appendix-semigroups} a brief introduction to semigroup theory, focusing on ideals and Green's relations.

\section{Symbolic dynamics}
\label{sec:dynamics}

This section covers some necessary material from symbolic dynamics. We refer to \cite{DurandPerrin2021} for a complete exposition on the topic, including proofs, and also to \cite{Fog02,Queffelec:10} as alternative sources. For more specialized texts on topological dynamics and ergodic theory, we refer to \cite{book/Petersen1983,Walters1982}.

\subsection{Topological dynamical systems}
\label{subsec:top-dynamics}

We first recall some terminology from topological dynamics. A \emph{topological dynamical system} 
is a pair $(X,T)$ of a compact metric space $X$ and a continuous transformation $T\colon X\to X$. A subset $Y\subseteq X$ is called \emph{stable} if $T(Y)\subseteq Y$. A nonempty topological dynamical system $(X,T)$ is \emph{minimal} if the only closed stable subsets of $X$ are $X$ and $\varnothing$. Equivalently, all $x\in X$ have dense forward orbits $\{T^n(x)\mid n\ge 0\}$. As a weaker notion, $(X,T)$ is \emph{transitive} if there exists $x\in X$ with dense forward orbit. Transitivity is equivalent to the condition that $U\cap T^{-n}V\ne\emptyset$ for all pairs of open sets $U,V\subseteq X$.

Let $\mu$ be a Borel probability measure on $X$. The \emph{support} of $\mu$ is the smallest closed set of measure 1. We say that $\mu$ is \emph{invariant} if $\mu(T^{-1}U)=\mu(U)$ for every Borel set $U\subseteq X$. Note that the support of an invariant measure is a closed stable subset. 

The measure $\mu$ is called \emph{ergodic} if it is invariant and every Borel set such that $T^{-1}(U)=U$ has measure $0$ or $1$. By Birkhoff's ergodic theorem, ergodicity can be interpreted as a form of asymptotic independence. More precisely, an invariant measure $\mu$ is ergodic if and only if
\begin{equation}
    \lim_{n\to\infty}\frac{1}{n}\sum_{i=0}^{n-1}\mu(U\cap T^{-i}V)=\mu(U)\mu(V)
    \label{eqErgodic}
\end{equation}
for every pair $U,V$ of Borel sets (see \cite{book/Petersen1983}). In particular, note that the support of an ergodic measure is a transitive system. As a stronger property, the measure $\mu$ is \emph{mixing} if 
  \begin{equation}
    \lim_{n\to\infty}\mu(U\cap T^{-n}V)=\mu(U)\mu(V)
    \label{eqMixing}
\end{equation}
for every pair $U,V$ of Borel sets. Every mixing measure is ergodic but the converse is false. 

Let $\mathcal{M}=\mathcal{M}(X,T)$ be the set of invariant probability measures on $(X,T)$ and $\mathcal{E} = \mathcal{E}(X,T)$ be the subset of those measures which are ergodic. We can view $\mathcal{M}$ as a subset of the dual $\mathcal C(X,\R)^*$ of the space of continuous maps $X\to \R$. If we equip $\mathcal C(X,\R)^*$ with the weak-$*$ topology, then $\mathcal{M}$ is a compact convex subset and $\mathcal{E}$ is its set of extreme points. In particular if the system $(X,T)$ has only one invariant measure, it must be ergodic; and we call $(X,T)$ \emph{uniquely ergodic}. Standard results from functional analysis imply that any invariant measure can be decomposed in terms of ergodic measures. More precisely, for every invariant measure $\mu\in\mathcal{M}$, there exists a Borel probability measure $\tau$ on $\mathcal E$ such that
\begin{equation}
    \mu=\int_{\mathcal E}\nu \,d\tau(\nu).\label{eqErgodicDecomposition}
\end{equation}
More details can be found in e.g. \cite[Chapter 12]{book/Phelps2001}. 

\subsection{Shift spaces}\label{subsec:defsubshift}

We now turn to symbolic dynamics. Let $A$ be a finite alphabet. We denote by $A^\Z$ the set of two-sided infinite sequences over $A$ equipped with the product topology, where $A$ has the discrete topology. We denote by $S$ the shift transformation on $A^\Z$, defined by $y=S(x)$ if $y_n=x_{n+1}$ for all $n\in\Z$. A \emph{shift space} $X$ on the alphabet $A$ is, by definition, a closed subset of $A^\Z$ which is stable under the shift transformation. Note that $(X,S)$ is a topological dynamical system. 

Let $A^*$ be the free monoid on $A$, $\varepsilon$ be the empty word, and $A^+=A^*\setminus\{\varepsilon\}$. We denote by $\cL(X)$ the set of finite words $w$ which appear in the elements $x\in X$ and by $\cL_n(X)$ the set of words of length $n$ in $\cL(X)$. For a shift space, transitivity and minimality can be characterized in terms of the language $\cL(X)$. A shift space $X$ is transitive if and only if for every $u,v\in\cL(X)$, there is some $w\in\cL(X)$ such that $uwv\in\cL(X)$. Moreover $X$ is minimal if and only if for every $n\ge 0$, there is an $N\ge 0$ such that every word of $\cL_n(X)$ appears in every word of $\cL_N(X)$. Shift spaces which are transitive are also called \emph{irreducible}.

For words $u,v$ of length $m,n$ respectively, we define the right and two-sided cylinders by 
\begin{equation*}
    [v]_X=\{x\in X\mid x_0\dots x_{n-1}=v\},\qquad [u\cdot v]_X=\{x\in X\mid x_{-m}\dots x_{n-1}=uv\}.
\end{equation*}
The two-sided cylinders form a clopen basis for the topology of $X$. For $L,K\subseteq A^*$, we define 
\[
    [L]_X=\bigcup_{w\in L}[w]_X,\qquad [L\cdot K]_X=\bigcup_{u\in L,v\in K}[u\cdot v]_X.
\]

Let $\mu$ be a Borel probability measure on a shift space $X$. By a slight abuse of notation, we also denote by $\mu$ the map which assigns to a word $w\in \cL(X)$ the number $\mu([w]_X)$. With this notation, we have $\mu(\varepsilon)=1$ and 
\begin{equation*}
    \sum_{a\in A}\mu(wa)=\mu(w).
\end{equation*}
We extend this to subsets $L\subseteq\cL(X)$ by letting $\mu(L)=\sum_{w\in L}\mu(w)$. 
When the cylinders $[u]_X$ for $u\in L$ are disjoint, we have $\mu(L)=\mu([L]_X)$. 
Moreover, if $\mu$ is an invariant measure, then 
\begin{equation}\label{eq:inv}
    \mu(w) = \mu([w\cdot\varepsilon]_X) = \sum_{a\in A}\mu(aw).
\end{equation}
When the cylinders $[u\cdot\varepsilon]_X$ for $u\in L$ are disjoint, we have $\mu(L)=\mu([L\cdot\varepsilon]_X)$.

Observe that the support of an invariant measure on $A^\Z$ is a shift space $X$ of measure 1, and when the measure is ergodic, $X$ is irreducible.

A \emph{Bernoulli measure} is the simplest case of an ergodic measure on $A^\Z$. The values $\mu(w)$ for $w\in A^*$ are given by a morphism $\mu\colon A^*\to[0,1]$ such that $\sum_{a\in A}\mu(a)=1$. It corresponds, in classical terms of probability theory, to a sequence $(\zeta_n)_{n\in\N}$ of independent identically distributed random variables, where $\zeta_n\from x\mapsto x_n$. A Bernoulli measure is in fact mixing. Further mixing examples can be found among sofic measures, also called rational measures or hidden Markov measures~\cite{Hansel1989,BoylePetersen2011}. 
See \cref{sec:markov} for a discussion on this topic.

Another important source of examples is the following. A \emph{substitution} is a monoid morphism $\sigma\colon A^*\to A^*$. The substitution $\sigma$ is \emph{primitive} if there is an $n\ge 0$ such that every $b\in A$ occurs in every $\sigma^n(a)$. The shift $X(\sigma)$, called a \emph{substitution shift}, is the set of all $x\in A^\Z$ such that all words in $\cL(x)$ appear in some $\sigma^n(a)$, $n\ge 0$, $a\in A$. If $\sigma$ is primitive then $X(\sigma)$ is minimal, and uniquely ergodic~\cite{Michel1974}. 

\begin{example}\label{exampleFibonacci}
    The Fibonacci substitution is the morphism $\sigma\colon \{a,b\}^*\to \{a,b\}^*$ defined by $\sigma\colon a\mapsto ab,b\mapsto a$. It is primitive and the substitution shift $X=X(\sigma)$ is called the \emph{Fibonacci shift}. A description of its unique ergodic measure can be found e.g.\ in \cite[Example~3.8.19]{DurandPerrin2021}.
\end{example}

The Fibonacci shift also belongs to the family of \emph{Sturmian shifts}, whose definition may be recalled in~\cite{Lothaire1983}. More precisely, the Fibonacci shift is Sturmian of slope $(3-\sqrt{5})/2$. The following is an example of an automatic sequence (see~\cite{book/Allouche2003}).

\begin{example}\label{exampleMorse}
    The primitive substitution $\sigma\colon a\mapsto ab,b\mapsto ba$ is called the \emph{Thue--Morse substitution}. The shift $X(\sigma)$ is called the \emph{Thue--Morse shift}. Its unique ergodic measure is described in \cite[Example 3.8.20]{DurandPerrin2021}.
\end{example}

\subsection{Density of a language}

Let $L$ be a  language on $A$ and let $\mu$ be an ergodic measure on $A^\Z$ with support $X$. We define the \emph{density} of $L$ relative to $\mu$ as
\begin{displaymath}
    \delta_\mu(L)=\lim_{n\to\infty}\frac{1}{n}\sum_{i=0}^{n-1}\mu(L\cap A^i),
\end{displaymath}
whenever the limit exists. In other words $\delta_\mu(L)$ is the Cesàro limit of $(\mu(L\cap A^n))_{n=0}^\infty$. When the classical limit of $(\mu(L\cap A^n))_{n=0}^\infty$ exists, we say that $L$ has a density \emph{in the strong sense}. It is known that when $\mu$ is a Bernoulli measure, the density of any rational language exists, and if $\mu$ has rational values on letters, all densities are rational numbers~\cite[Theorem~2.1]{Berstel1972}.

Let us note some basic properties of the density. If $L$ has a density, then  $0\le \delta_\mu(L)\le 1$, as $0\le\mu(L\cap A^i)\le 1$ for every $i\ge 0$. Moreover, since $\mu(w)=0$ when $w\notin\cL(X)$, we have $\delta_\mu(L)=\delta_\mu(L\cap\cL(X))$. Thus if $L,K$ satisfy $L\cap\cL(X)=K\cap\cL(X)$ then $\delta_\mu(L)=\delta_\mu(K)$. The density is also finitely additive: if $L,K$ have densities and $L\cap K=\varnothing$, then $L\cup K$ has density $\delta_\mu(L\cup K)=\delta_\mu(L)+\delta_\mu(K)$. Additionally, we have $\delta_\mu(A^*\setminus L)=1-\delta_\mu(L)$, and if $L\cup L'=A^*$ then $L\cap L'$ has density $\delta_\mu(L\cap L') = \delta_\mu(L)+\delta_\mu(L')-1$. However, the density of an intersection $L\cap K$ might not exist, even when the density of both $L$ and $K$ exist.
\begin{example}
    On a fixed finite alphabet $A$, consider the two languages:
    \begin{equation*}
        L = \{ w\in A^* : |w| \equiv 1 \mod 2 \},\qquad K = \{ w\in A^+ : |w|\equiv \lfloor\log_2|w|\rfloor \mod 2 \}.
    \end{equation*}
    It is clear that $\delta_\mu(L)=\delta_\mu(K)=1/2$ (no matter the measure $\mu$). However, it can be shown that the sequence of sums $s_n = \frac1{n}\sum_{i=1}^{n}\mu(L\cap K\cap A^i)$ has subsequences $(s_{2^{2n}})_{n\in\N}$ and $(s_{2^{2n+1}})_{n\in\N}$ converging to $1/3$ and $1/6$ respectively, thus $\delta_\mu(L\cap K)$ does not exist.
\end{example}

In the above example, the language $K$ is not rational. When both languages are rational our main result (\cref{theoremMain}) implies that the density of the intersection also exists.

\section{Density of ideals} 
\label{sec:idealsfullmonoid}

In this section, we show how to calculate densities for languages that are  ideals of $A^*$ (and thus not necessarily rational). We start with {\em right ideals}, that is, languages $L$ such that $LA^*=L$. The set $D=L\setminus L A^+$ is the  minimal generating set of $L$ as a right ideal, which means that $L= DA^*$ and $D$ is contained in every other set with this property. Moreover the set $D$ is a \emph{prefix code}, meaning that no element of $D$ is a proper prefix of another element of $D$. For example, for $a\in A$, the language $L= aA^*$ is the set of words  that have the letter $a$ as a prefix. Then one has $D=\{a\}$.

\begin{proposition}\label{propositionRightIdeal}
    Let $\mu$ be a probability measure on $A^\Z$ and $L$ be a right ideal of $A^*$. Then
    $L$ has a density in the strong sense and $\delta_\mu(L)=\mu(D)$ where $D = L\setminus LA^+$.
\end{proposition}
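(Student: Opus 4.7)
The approach I would take is a direct computation based on the unique decomposition of words in $L = DA^*$. Since $D$ is a prefix code, each word $u \in L \cap A^n$ factors uniquely as $u = dv$ with $d \in D$ of length at most $n$ and $v \in A^{n-|d|}$. Setting $D_k = D \cap A^k$, this gives the disjoint union $L \cap A^n = \bigsqcup_{k=0}^{n} D_k A^{n-k}$, and the map $(w,v) \mapsto wv$ from $D_k \times A^{n-k}$ onto $D_k A^{n-k}$ is injective because $|w| = k$ is fixed.

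Next I would establish the identity $\mu(wA^m) = \mu(w)$ for any word $w$ and any $m \geq 0$. This follows by iterating the relation $\mu(w) = \sum_{a \in A} \mu(wa)$, which holds for any probability measure on $A^\Z$ because the cylinders $[wa]_X$ partition $[w]_X$ (note that shift invariance is \emph{not} required here). Applying this identity to each $w \in D_k$ and summing, we obtain $\mu(D_k A^{n-k}) = \sum_{w \in D_k} \mu(wA^{n-k}) = \mu(D_k)$. Combining with the decomposition above yields $\mu(L \cap A^n) = \sum_{k=0}^{n} \mu(D_k)$.

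The final step is to observe that the right-hand side is the $n$-th partial sum of the series $\sum_{k \geq 0} \mu(D_k) = \mu(D)$. This series converges, since its terms are nonnegative and its partial sums are bounded above by $1$: the cylinders $[w]_X$ for $w \in D$ are pairwise disjoint by the prefix property, so their total measure is at most $\mu(X) \leq 1$. Hence $\mu(L \cap A^n) \to \mu(D)$ as $n \to \infty$, which is precisely convergence in the strong sense. No step of the argument presents a genuine obstacle; the proof amounts to bookkeeping with the prefix-code structure of $D$ together with finite additivity of $\mu$ along one-letter extensions.
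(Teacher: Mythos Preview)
Your proof is correct and follows essentially the same approach as the paper: decompose $L\cap A^n$ via the prefix code $D$, use $\mu(wA^m)=\mu(w)$ (which indeed needs no invariance), and pass to the limit. The paper's version is more terse, but the ideas are identical.
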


\begin{proof}
    Let $X$ be the support of $\mu$. Since $D$ is a prefix code,
    \begin{equation*}
            \mu(DA^*\cap A^n)=\sum_{u\in D}\mu(uA^*\cap A^n)=\sum_{u\in D, |u|\le n}\mu(uA^{n-|u|})=\sum_{u\in D,|u|\le n}\mu(u)
    \end{equation*}
    where the last equality uses the fact that $\mu(uA^i) = \mu(u)$ for all $i$. Clearly this tends to $\mu(D)$ when $n\to\infty$.
\end{proof}

Next we establish a similar result for {\em left ideals} of $A^*$, that is, languages $L$ such that $A^*L=L$. Similar to the right-sided case, the set $G = L\setminus A^+L$ is the minimal generating set of $L$ as a left ideal and also a \emph{suffix code}, i.e., no element of $G$ is proper suffix of another. 
\begin{proposition}\label{propositionLeftIdeal}
    Let $\mu$ be an invariant probability measure of $A^\Z$ and $L$ be a left ideal of $A^*$. Then
    $L$ has a density in the strong sense and $\delta_\mu(L)=\mu(G)$ where $G = L\setminus A^+L$.
\end{proposition}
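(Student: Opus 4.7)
The proof will closely parallel the one for right ideals (\cref{propositionRightIdeal}), with shift invariance playing the role that was previously played by the elementary identity $\mu(uA^i)=\mu(u)$.

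The plan is to show, for each $n$, the exact identity
\begin{equation*}
    \mu(L\cap A^n)=\sum_{v\in G,\,|v|\le n}\mu(v),
\end{equation*}
from which the strong convergence $\mu(L\cap A^n)\to\mu(G)$ is immediate, since the right-hand side is a partial sum of the absolutely convergent series defining $\mu(G)$.

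First, I would use the hypothesis that $G$ is a suffix code to decompose $L\cap A^n$ as a disjoint union. Since $L=A^*G$ and any two suffixes of a given word $w$ are comparable for the suffix order, if $w=uv=u'v'$ with $v,v'\in G$ then $v=v'$. Hence every $w\in L$ admits a unique factorisation $w=uv$ with $v\in G$ a suffix of $w$, and we get
\begin{equation*}
    L\cap A^n=\bigsqcup_{v\in G,\,|v|\le n}A^{n-|v|}v.
\end{equation*}

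Second, I would compute $\mu(A^kv)=\sum_{u\in A^k}\mu(uv)$ using shift invariance. Equation~\eqref{eq:inv} gives $\mu(w)=\sum_{a\in A}\mu(aw)$; iterating this $k$ times yields $\mu(v)=\sum_{u\in A^k}\mu(uv)$, so $\mu(A^{n-|v|}v)=\mu(v)$. Summing over $v\in G$ with $|v|\le n$, using the disjointness from the first step, produces the claimed identity and completes the proof.

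There is no real obstacle: the only substantive point is that invariance of $\mu$ must be used to evaluate $\mu(A^kv)$, which is exactly why the statement assumes $\mu$ invariant (whereas \cref{propositionRightIdeal} required no such hypothesis, since right concatenation by $A^i$ corresponds to the forward direction already built into the definition $\mu(w)=\sum_{a\in A}\mu(wa)$). Once invariance is used in this way, the argument is entirely parallel to that of \cref{propositionRightIdeal}.
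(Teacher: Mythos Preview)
Your proof is correct and follows essentially the same approach as the paper: decompose $L\cap A^n$ into a disjoint union of sets $A^{n-|v|}v$ using that $G$ is a suffix code, apply invariance to get $\mu(A^{n-|v|}v)=\mu(v)$, and conclude by passing to the limit of the partial sums. Your justification of the final limit (as a partial sum of a non-negative series) is in fact a bit more explicit than the paper's.
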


\begin{proof}
    Let $X$ be the support of $\mu$. Since $G$ is a suffix code,
  \begin{equation*}
    \mu(A^*G\cap A^n)=\sum_{u\in G}\mu(A^*u\cap A^n)=
    \sum_{u\in G,|u|\le n}\mu(A^{n-|u|}u)=\sum_{u\in G,|u|\le n}\mu(u),
  \end{equation*}
  where the last equality uses the fact that since $\mu$ is invariant, $\mu(A^iu) = \mu(u)$ for all $i$. By invariance of $\mu$ again, this tends to $\mu(G)$ when $n\to\infty$.
\end{proof}

The density of a left ideal may not exist if the measure is not invariant, as shown next.

\begin{example}\label{exampleNotInvariant}
    Let $x\in A^\Z$ and let $\mu$ be the Dirac measure of $x$, that is, the probability measure $\mu$ on $A^\Z$ such that $\mu(U)=1$ if $x\in U$ and $0$ otherwise. Then, for $a\in A$, the density of $A^*a$ is the frequency of $a$ in the sequence $x^+=x_0x_1\cdots$. If the frequency of $a$ does not exist in $x^+$, the language $A^*a$ does not have a density.  For example, if $x^+=aba^2b^2\cdots a^{2^n}b^{2^n}\cdots$, then the frequency of $a$ does not exist. Indeed, the frequency of $a$ is $1/2$ in $aba^2b^2\cdots a^{2^n}b^{2^n}$, while it is close to $2/3$ in $aba^2b^2\cdots a^{2^n}b^{2^n}a^{2^{n+1}}$.
\end{example}

Next we consider the case of a \emph{quasi-ideal}, that is, the intersection of a left
ideal and a right ideal. In this case we need the stronger assumption that $\mu$ is ergodic.

\begin{proposition}\label{propositionQuasiIdeal}
    Let $\mu$ be an ergodic measure on $A^\Z$, $L$ be a right ideal of $A^*$, and $K$ be a left ideal of $A^*$. Then $L\cap K$ has a density and $\delta_\mu(L\cap K)=\mu(D)\mu(G)$ where $D = L\setminus LA^+$ and $G = K\setminus A^+K$. Moreover, if $\mu$ is mixing, then $L\cap K$ has a density in the strong sense.
\end{proposition}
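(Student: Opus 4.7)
The plan is to express $\mu(L\cap K\cap A^n)$ in the form $\mu(U\cap S^{-n}V)$, up to an error that vanishes as $n\to\infty$, and then apply the characterization of ergodicity given by equation~\eqref{eqErgodic}. The natural choice is $U=[D]_X$ and $V=[G\cdot\varepsilon]_X$, which play the role of the cylinders for ``starting with a prefix in $D$'' and ``ending just before position $0$ with a suffix in $G$'' respectively. Since $D$ is a prefix code and $G$ a suffix code, these cylinders decompose as disjoint unions $U=\bigsqcup_{k\ge 0}U_k$ and $V=\bigsqcup_{k'\ge 0}V_{k'}$ where $U_k=[D\cap A^k]_X$ and $V_{k'}=[(G\cap A^{k'})\cdot\varepsilon]_X$. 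The computations in the proofs of \cref{propositionRightIdeal,propositionLeftIdeal} then give $\mu(U)=\mu(D)$ and $\mu(V)=\mu(G)$, where for the latter I would also use invariance of $\mu$.

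The next step is the bookkeeping. A word $x_0\ldots x_{n-1}$ belongs to $L\cap K$ if and only if it has a (necessarily unique) prefix in $D$ of some length $k\le n$ and a (unique) suffix in $G$ of some length $k'\le n$. Shifting the suffix condition by $n$, this yields the identity
\begin{equation*}
    \mu(L\cap K\cap A^n)=\sum_{k\le n,\,k'\le n}\mu(U_k\cap S^{-n}V_{k'}).
\end{equation*}
On the other hand $\mu(U\cap S^{-n}V)=\sum_{k\ge 0,\,k'\ge 0}\mu(U_k\cap S^{-n}V_{k'})$ without any truncation. The difference is bounded by $\sum_{k>n}\mu(U_k)+\sum_{k'>n}\mu(V_{k'})$, which by invariance equals $\sum_{u\in D,\,|u|>n}\mu(u)+\sum_{g\in G,\,|g|>n}\mu(g)$, and this tends to $0$ as $n\to\infty$ because the full sums $\mu(D)$ and $\mu(G)$ are finite (bounded by $1$).

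Having established that $\mu(L\cap K\cap A^n)=\mu(U\cap S^{-n}V)+o(1)$, the conclusion follows readily. Taking the Cesàro mean of both sides and using ergodicity of $\mu$ through equation~\eqref{eqErgodic}, the right-hand side converges to $\mu(U)\mu(V)=\mu(D)\mu(G)$, while the $o(1)$ term also contributes $0$ in the Cesàro average. This gives existence of the density and the stated formula. For the mixing case, I would replace the application of \eqref{eqErgodic} with \eqref{eqMixing}, so that $\mu(U\cap S^{-n}V)\to\mu(U)\mu(V)$ in the ordinary sense, and the same $o(1)$ bound then upgrades convergence to the strong sense.

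The main technical point I expect to require care is the truncation step: one has to be comfortable with the fact that prefixes in $D$ and suffixes in $G$ may overlap inside a word of length $n$ (when $k+k'>n$), which is why I avoid any explicit formula for $\mu(U_k\cap S^{-n}V_{k'})$ and instead rely on the global bound coming from $\sum_k\mu(U_k)<\infty$. Everything else is a clean application of ergodicity to the two cylinders $[D]_X$ and $[G\cdot\varepsilon]_X$.
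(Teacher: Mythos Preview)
Your proof is correct and follows essentially the same approach as the paper: reduce $\mu(L\cap K\cap A^n)$ to an expression of the form $\mu([D]_X\cap S^{-n}[G\cdot\varepsilon]_X)$ and invoke \eqref{eqErgodic} (respectively \eqref{eqMixing}). The only organisational difference is that the paper fixes a single pair $u\in D$, $v\in G$, computes $\delta_\mu(uA^*\cap A^*v)=\mu(u)\mu(v)$ via ergodicity, and then sums over all pairs, whereas you apply ergodicity once to the global cylinders and control the truncation error by the tails $\sum_{|u|>n}\mu(u)+\sum_{|g|>n}\mu(g)$. Your packaging has the small advantage of making the interchange of the (possibly infinite) sum over $D\times G$ with the Cesàro limit explicit, a step the paper leaves implicit.
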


\begin{proof}
    Let $X$ be the support of $\mu$. Fix $u\in D$ and $v\in G$ and let $m=\max(|u|,|v|)$. Then $\mu(uA^*\cap A^*v\cap A^n)=0$ if $n<m$ and otherwise
    \begin{displaymath}
        \mu(uA^*\cap A^*v\cap A^n)=\mu([u]_X\cap S^{|v|-n}[v]_X)
    \end{displaymath}
    and thus using Equation \eqref{eqErgodic},
    \begin{align*}
        \delta_\mu(uA^*\cap A^*v)=&\lim_{n\to\infty}\frac{1}{n}\sum_{i=0}^{n-1}\mu(uA^*\cap A^*v\cap A^i) 
        = \lim_{n\to\infty}\frac{1}{n}\sum_{i=m}^{n-1}\mu([u]_X\cap S^{|v|-i}[v]_X)\\
        &= \lim_{n\to\infty}\frac{1}{n}\sum_{i=0}^{n-1}\mu([u]_X\cap S^{-i}[v]_X)
        =\mu([u]_X)\mu([v]_X)=\mu(u)\mu(v).
    \end{align*}
    This shows that
    \[
        \delta_\mu(DA^*\cap A^*G)=\sum_{u,v}\delta_\mu(uA^*\cap A^*v)=\sum_{u,v}\mu(u)\mu(v)=\mu(D)\mu(G).
    \]
    If $\mu$ is mixing, then with similar arguments
    \[
        \lim_{n\to\infty} \mu(uA^*\cap A^*v\cap A^n)=\lim_{n\to\infty}\mu([u]_X\cap S^{-n}[v]_X)=\mu(u)\mu(v),
    \]
    which shows that the density exists in the strong sense. 
\end{proof}

\begin{example}\label{ex:not-mixing}
    If $X=\{(ab)^\infty, (ba)^\infty\}$, then the language $L=aA^*\cap A^*b$ has density $1/2$ with respect to the unique invariant measure $\mu$ on $X$. However $\mu$ is not mixing and the density does not exist in the strong sense as $\mu(aA^*\cap A^*b\cap A^n)$ alternates between 0 and 1.
\end{example}

Next we turn to \emph{two-sided ideals}, that is, languages $L$ such that $A^*LA^*=L$. 

\begin{proposition}\label{proposition2SidedIdeal}
    Let $\mu$ be an ergodic measure on $A^\Z$ with support $X$. Every two-sided ideal  $L$ that intersects $\cL(X)$ has density $1$ in the strong sense.
\end{proposition}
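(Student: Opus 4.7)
The plan is to reduce the problem to showing that $\mu$-almost every $x\in X$ contains a fixed word $w\in L\cap\cL(X)$ as a factor of its one-sided extension $x_0x_1\cdots$, and then apply monotone continuity of measure. Fix any $w\in L\cap\cL(X)$, which exists by hypothesis. Since $L$ is a two-sided ideal, $A^*wA^*\subseteq L$, so $A^n\setminus L\subseteq A^n\setminus A^*wA^*$, the set of words of length $n$ that do not contain $w$ as a factor.

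Using that the cylinders $[u]_X$ for $u\in A^n$ partition $X$ and have total $\mu$-mass $1$, I would rewrite
\[
\mu(A^n\setminus A^*wA^*)=\mu(B_n),\qquad B_n=\{x\in X\mid w\text{ is not a factor of }x_0\dots x_{n-1}\}.
\]
The sequence $(B_n)_n$ is decreasing with intersection $B=\{x\in X\mid w\text{ is not a factor of }x_0x_1\cdots\}$, so by continuity of measure $\mu(B_n)\to\mu(B)$. Everything thus reduces to proving $\mu(B)=0$.

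For this, I would first verify $\mu([w]_X)>0$. Since $w\in\cL(X)$ appears as a factor in some $x\in X$, shift-invariance of $X$ produces a point of $[w]_X$, so $[w]_X$ is a nonempty open subset of the support $X$ of $\mu$, hence has positive measure. Now set $C=X\setminus B=\bigcup_{n\ge 0}S^{-n}[w]_X$. Clearly $S^{-1}C\subseteq C$, and invariance of $\mu$ gives $\mu(S^{-1}C)=\mu(C)$, so $C$ coincides with $S^{-1}C$ modulo a null set. Ergodicity of $\mu$ then forces $\mu(C)\in\{0,1\}$, and since $[w]_X\subseteq C$ has positive measure we conclude $\mu(C)=1$ and $\mu(B)=0$.

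Combining the steps yields $\mu(L\cap A^n)\ge 1-\mu(B_n)\to 1$, which gives density $1$ in the strong sense. There is no real obstacle here; the only slightly subtle point is the passage from $w\in\cL(X)$ to $\mu([w]_X)>0$, which is handled by the support characterization of $\mu$ together with shift-invariance of $X$.
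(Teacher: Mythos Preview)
Your proof is correct, and it takes a genuinely different route from the paper's argument.

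The paper exploits the preceding propositions on one-sided ideals and quasi-ideals: writing $L=DA^*=A^*G$ with $D$ a prefix code and $G$ a suffix code, \cref{propositionRightIdeal,propositionLeftIdeal} give $\delta_\mu(L)=\mu(D)=\mu(G)$ in the strong sense, and \cref{propositionQuasiIdeal} (which is where ergodicity enters, via the Birkhoff-type identity \eqref{eqErgodic}) gives $\delta_\mu(L)=\mu(D)\mu(G)=\delta_\mu(L)^2$, forcing $\delta_\mu(L)\in\{0,1\}$. Your argument bypasses this machinery entirely: you fix a single $w\in L\cap\cL(X)$, reduce to showing $\mu(B)=0$ where $B$ is the set of points whose forward tail avoids $w$, and invoke ergodicity directly as a $0$--$1$ law for the almost-invariant set $C=\bigcup_{n\ge 0}S^{-n}[w]_X$.

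What this buys: your proof is self-contained and does not rely on \cref{propositionQuasiIdeal}, whose proof requires the averaged-correlation characterization of ergodicity. The paper's proof, on the other hand, fits naturally into the narrative of \cref{sec:idealsfullmonoid} and gets the result almost for free once the quasi-ideal formula is in place. One minor point worth making explicit in your write-up: the passage from ``$S^{-1}C\subseteq C$ and $\mu(S^{-1}C)=\mu(C)$'' to ``$\mu(C)\in\{0,1\}$'' uses the standard fact that the ergodicity condition extends from strictly invariant sets to sets invariant modulo null sets (or, equivalently, pass to the strictly invariant set $\bigcap_{n\ge 0}S^{-n}C$, which has the same measure as $C$).
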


\begin{proof}
    Let $D = L\setminus LA^+$ and $G = L\setminus A^+L$. Then $L = DA^* = A^*G$ and by the formulas for right and left ideals, the density exists in the strong sense and $\delta_\mu(L)=\mu(D)=\mu(G)$. But, using the formula for quasi-ideals, we also have $\delta_\mu(L) = \mu(D)\mu(G) = \delta_\mu(L)^2$, so $\delta_\mu(L) = 0$ or 1. The fact that $L\cap \cL(X)\neq\varnothing$ implies $\delta_\mu(L)>0$. 
\end{proof}

Let $X$ be a shift space. We derive from \cref{proposition2SidedIdeal} the following property for $X$-thin codes, where  the languages $L$ such that there exists $w\in \cL(X)$  satisfying $ A^*wA^*\cap L=\varnothing$ are called \emph{$X$-thin} by the  terminology of \cite{BerstelDeFelicePerrinReutenauerRindone2012}.  A suffix code $C\subseteq \cL(X)$ is \emph{$X$-maximal} if it is not properly included in a suffix code $D\subseteq \cL(X)$. A dual notion holds for prefix codes. For example, every $\cL_n(X)$, for $n\ge 1$, is an $X$-maximal suffix code.
\begin{proposition}\label{propositionThinSuff}
    Let $\mu$ be an ergodic measure with support $X$ and let $C\subseteq\cL(X)$ be an $X$-maximal prefix or suffix code. If $C$ is $X$-thin, then $\mu(C)=1$.  
\end{proposition}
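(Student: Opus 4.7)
I would prove the $X$-maximal suffix case in detail; the prefix case follows symmetrically by replacing the two-sided cylinders $[w\cdot\varepsilon]_X$ with the right cylinders $[w]_X$ and backward recurrence by forward recurrence. Fix a word $w_0\in\cL(X)$ witnessing $X$-thinness, i.e., $A^*w_0A^*\cap C=\varnothing$.

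The first and only subtle step is a purely combinatorial lemma combining $X$-maximality with $X$-thinness: every $v\in\cL(X)$ that contains $w_0$ as a factor admits some element of $C$ as a suffix. If $v\in C$ this is trivial; otherwise $X$-maximality forces $C\cup\{v\}$ not to be a suffix code, so there is some $c\in C$ comparable with $v$ under the suffix order. The case where $v$ is a proper suffix of $c$ is excluded by thinness, since $v$ — and hence $c$ — would then contain $w_0$ as a factor, contradicting $A^*w_0A^*\cap C=\varnothing$; so $c$ must be a suffix of $v$.

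Next I would translate $\mu(C)$ into the measure of a suitable recurrence set. Since $C$ is a suffix code, the cylinders $[w\cdot\varepsilon]_X$ for $w\in C$ are pairwise disjoint, and the conventions recalled in \cref{subsec:defsubshift} give $\mu(C)=\mu(Y)$, where $Y=\bigcup_{w\in C}[w\cdot\varepsilon]_X$ is the set of $x\in X$ for which some backward block $x_{-n}\cdots x_{-1}$ lies in $C$. By the combinatorial lemma, $Y$ contains every $x$ in whose backward ray $\ldots x_{-2}x_{-1}$ the word $w_0$ occurs as a factor.

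Finally, since $X$ is the topological support of $\mu$ and $[w_0]_X$ is a nonempty open subset of $X$, one has $\mu([w_0]_X)>0$. Ergodicity of $\mu$ under $S$ (equivalently under $S^{-1}$) together with Birkhoff's ergodic theorem then yields that for $\mu$-almost every $x\in X$, $S^{-k}(x)\in[w_0]_X$ for infinitely many $k\geq 1$, so $w_0$ occurs in the backward ray of almost every point. Hence $\mu(Y)=1$ and therefore $\mu(C)=1$. The main obstacle is really the combinatorial lemma, as that is where the interplay between maximality and thinness must be handled carefully; the measure-theoretic conclusion is a routine application of recurrence in an ergodic system.
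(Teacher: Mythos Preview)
Your proof is correct. The combinatorial lemma you isolate---that every word of $\cL(X)$ containing $w_0$ as a factor must have a suffix in $C$---is exactly the claim the paper proves, with the same argument (maximality forces comparability, thinness rules out the wrong direction).

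Where you diverge is in the measure-theoretic conclusion. The paper stays within the density framework it has built in \cref{sec:idealsfullmonoid}: from the combinatorial lemma it gets $A^*w_0A^*\cap\cL(X)\subseteq A^*C$, then invokes \cref{propositionLeftIdeal} to identify $\delta_\mu(A^*C)=\mu(C)$ and \cref{proposition2SidedIdeal} to get $\delta_\mu(A^*w_0A^*)=1$, and concludes by monotonicity of density. You instead pass to the cylinder set $Y=[C\cdot\varepsilon]_X$, observe $\mu(C)=\mu(Y)$ by disjointness, and show $\mu(Y)=1$ via backward recurrence to $[w_0]_X$ using Birkhoff's theorem. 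Your route is more self-contained and arguably more direct, since it does not rely on the earlier density results (which themselves rest on ergodicity via Equation~\eqref{eqErgodic}); the paper's route has the virtue of reusing machinery already in place and keeping the argument phrased in the language of densities that the section is developing.
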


\begin{proof}
We assume  that $C$ is an $X$-maximal suffix code.
    Let $w\in\cL(X)$ be such that $A^*wA^*\cap C=\varnothing$. We claim that 
        $A^*wA^*\cap \cL(X)\subseteq A^*C$.
    Indeed, let $u,v$ be such that $uwv\in \cL(X)$. Since $C$ is $X$-maximal, either $uwv$ has a suffix in $C$ or is a suffix of some $c\in C$. The second case being impossible, we conclude that $uwv\in A^*C$, which proves the claim.  Therefore, we have $\delta_\mu(A^*wA^*)\le\delta_\mu(A^*C)$. Then by Propositions~\ref{proposition2SidedIdeal} and \ref{propositionLeftIdeal}, we have $\mu(C) =  \delta_\mu(A^*C)\geq \delta_\mu(A^*wA^*) = 1$.
    The proof for the dual statement for prefix codes works analogously.
\end{proof}
In particular, a finite prefix or prefix code is $X$-maximal if and only if $\mu(C)=1$ for an ergodic measure with support $X$. 

\section{J-class of a shift space}
\label{sec:Jclass}

This section is devoted to $\JJ$-classes of finite monoids associated with shift spaces. The reader who is not familiar with semigroup theory and Green's relations might want to consult textbooks such as~\cite{BerstelPerrinReutenauer2009,book/Lallement1979,book/Grillet1995,book/Howie1995,Eilenberg1974,Eilenberg1976}.
Alternatively, \cref{appendix-semigroups} provides a brief account of the necessary material.
The notion of a $\JJ$-class discussed here also appeared in \cite{Perrin2015,PerrinSchupp1986}, and is tangentially related with the work of Almeida on free profinite semigroups~\cite{Almeida2005}. See also the survey \cite{Colcombet2011} for more on the connections between Green's relations and automata theory.

\subsection{Definition and first properties}

Let $X$ be a shift space on $A$ and let $\varphi\colon A^*\to M$ be a morphism onto a finite monoid $M$. We first introduce two subsets of $M$ naturally associated with $X$. 

\begin{definition}
    Let $K_X(M)$ be the intersection of all two-sided ideals $I$ of $M$ such that $I\cap\varphi(\cL(X))\neq\varnothing$, called the \emph{$X$-minimal ideal of $M$}. Let $J_X(M)$ be the set of elements of $M$ which generate $K_X(M)$ as a two-sided ideal, i.e., 
    \begin{equation*}
        J_X(M) = \{ m\in M \mid MmM = K_X(M)\}.
    \end{equation*}
\end{definition}

It follows from the definition that $J_X(M)$ is a $\JJ$-class. Consider the quasi-order defined on $M$ by $m\leq_{\JJ} n\iff MmM\subseteq MnM$. Note that $m\JJ n$ if and only if $m\leq_{\JJ} n$ and $n\leq_{\JJ} m$. The next proposition establishes some basic properties of $J_X(M)$ when $X$ is irreducible.

\begin{proposition}\label{prop:Jclass}
    Let $X$ be an irreducible shift space on $A$ and let $\varphi\colon A^*\to M$ be a morphism onto a finite monoid $M$. Then 
    \begin{enumerate}
        \item $K_X(M)$ is an ideal  of $M$ which meets $\varphi(\cL(X))$;
            \label{Jclass-ideal}
        \item $J_X(M)=\{m\in K_X(M)\mid MmM\cap\varphi(\cL(X))\ne\varnothing\}$;
            \label{Jclass-intersect}
        \item $J_X(M)\cap\varphi(\cL(X))$ is the non-empty set of $\leq_{\JJ}$-minimal elements of $\varphi(\cL(X))$;
            \label{Jclass-J-minimal}
        \item Either $J_X(M)$ is the minimal ideal $K(M)$ of $M$, or $J_X(M)\cup\{0\}$ is the unique 0-minimal ideal in the quotient of $M$ by the largest ideal of $M$ which does not meet $\varphi(\cL(X))$. 
            \label{Jclass-min-ideal}
    \end{enumerate}
\end{proposition}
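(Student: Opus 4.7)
My plan is to treat the four items in order, with the crucial ingredient being irreducibility of $X$: this guarantees that ideals meeting $\varphi(\cL(X))$ are closed under intersection, which is what makes $K_X(M)$ a well-behaved object. For (1), since $M$ is finite, $K_X(M)$ is a finite intersection of ideals, hence itself an ideal. To show it meets $\varphi(\cL(X))$, I would prove the two-ideal case and iterate: if $I_1, I_2$ are two-sided ideals with $u_i \in \cL(X)$ satisfying $\varphi(u_i) \in I_i$, then irreducibility supplies $w \in A^*$ with $u_1 w u_2 \in \cL(X)$, and $\varphi(u_1 w u_2) \in I_1 \cap I_2$ since each $I_i$ is two-sided.

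Item (2) is a short bookkeeping argument: if $m \in J_X(M)$ then $MmM = K_X(M)$ meets $\varphi(\cL(X))$ by (1); conversely, if $m \in K_X(M)$ and $MmM \cap \varphi(\cL(X)) \neq \varnothing$, then $MmM$ is an ideal meeting $\varphi(\cL(X))$, whence $K_X(M) \subseteq MmM$, while $MmM \subseteq K_X(M)$ because $m \in K_X(M)$ and the latter is an ideal. For (3), observe that for any $m \in \varphi(\cL(X))$ the principal ideal $MmM$ meets $\varphi(\cL(X))$, so $K_X(M) \subseteq MmM$; picking $n \in J_X(M) \cap \varphi(\cL(X))$ via (1) and (2) gives $n \leq_{\JJ} m$, and $\leq_{\JJ}$-minimality of $m$ then forces $n \JJ m$, placing $m$ in $J_X(M)$. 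The converse direction uses the same inclusion $MnM \subseteq MmM = K_X(M)$ for any $n \in \varphi(\cL(X))$ with $n \leq_{\JJ} m$, together with $K_X(M) \subseteq MnM$ from (2), to force $n \JJ m$.

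For (4), there are two cases. If every two-sided ideal of $M$ meets $\varphi(\cL(X))$, then $K_X(M)$ equals the minimal ideal $K(M)$, and since $K(M)$ is a single $\JJ$-class of the finite monoid $M$, we get $J_X(M) = K(M)$. Otherwise, let $I_0$ be the union of all two-sided ideals of $M$ avoiding $\varphi(\cL(X))$: this union is itself such an ideal, and is by construction the largest one with this property. The key observation is that for any $m \notin I_0$, the ideal $MmM \cup I_0$ strictly contains $I_0$, so by maximality it meets $\varphi(\cL(X))$; since $I_0$ does not, $MmM$ itself must meet $\varphi(\cL(X))$, forcing $K_X(M) \subseteq MmM$. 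This yields $K_X(M) \setminus I_0 = J_X(M)$, so the image of $K_X(M)$ in the Rees quotient $M/I_0$ is exactly $J_X(M) \cup \{0\}$. For 0-minimality and uniqueness, any nonzero ideal of $M/I_0$ lifts to an ideal of $M$ strictly containing $I_0$, which therefore meets $\varphi(\cL(X))$ and must contain $K_X(M)$; hence $J_X(M) \cup \{0\}$ is contained in every nonzero ideal of $M/I_0$.

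The main subtlety is the irreducibility step in (1): every subsequent item relies on $K_X(M) \cap \varphi(\cL(X))$ being nonempty, and this closure-under-intersection property is really the only place where the structure of $X$ (beyond the language $\cL(X)$) enters. The remainder is a careful accounting of principal ideals against the finite ideal lattice of $M$.
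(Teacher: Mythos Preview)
Your argument is correct and, for items (1)--(3), essentially identical to the paper's: the same use of irreducibility to close the family of ideals meeting $\varphi(\cL(X))$ under intersection, and the same bookkeeping with principal ideals. One cosmetic point: in your converse direction for (3), the inclusion $K_X(M)\subseteq MnM$ comes straight from the definition of $K_X(M)$ (since $MnM$ is an ideal meeting $\varphi(\cL(X))$), not really from (2).

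For item (4) your route differs slightly from the paper's. The paper splits on whether $K(M)\cap\varphi(\cL(X))\ne\varnothing$ (equivalent to your dichotomy), defines the same ideal $I_0$, but then proves that the quotient $M/I_0$ is \emph{prime} and invokes an external result stating that a prime monoid with zero has a unique $0$-minimal ideal. You instead argue directly that $J_X(M)\cup\{0\}$ is contained in every nonzero ideal of $M/I_0$, by lifting such an ideal to $M$, observing it strictly contains $I_0$ and hence meets $\varphi(\cL(X))$, and concluding it contains $K_X(M)$. Your approach is more self-contained and avoids the external citation; the paper's approach has the mild advantage of isolating primality as a reusable structural property of the quotient. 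Both are short and neither is clearly superior.
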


\begin{proof}
    \ref{Jclass-ideal}. Let $I,J$ be two ideals which meet $\varphi(\cL(X))$. Let $u,v\in\cL(X)$ be such that $\varphi(u)\in I$ and $\varphi(v)\in J$. Since $X$ is irreducible, there is a word $w$ such that $uwv\in\cL(X)$. Then $\varphi(uwv)\in I\cap J$. This proves that $K_X(M)$ is an ideal which meets $\varphi(\cL(X))$.
    
    \ref{Jclass-intersect}. Let $m\in J_X(M)$ and let $n\in K_X(M)\cap\varphi(\cL(X))$. Then $n$ is in $MmM$. This proves the inclusion from left to right. Conversely, let $m\in K_X(M)$ be such that $MmM\cap\varphi(\cL(X))\ne\varnothing$. Then, the ideal $MmM$ generated by $m$ is contained in $K_X(M)$ and it meets $\varphi(\cL(X))$. This implies $MmM=K_X(M)$ and therefore $m$ is in $J_X(M)$.

    \ref{Jclass-J-minimal}. Let $\varphi(u)$ be $\leq_{\JJ}$-minimal in $\varphi(\cL(X))$. Fix $v\in\cL(X)$. By irreducibility there is $w\in\cL(X)$ such that $uwv\in\cL(X)$. It follows that $\varphi(uwv)\leq_{\JJ}\varphi(u)$, which by minimality implies $\varphi(u)\leq_{\JJ}\varphi(uwv)\leq_{\JJ}\varphi(v)$. This shows that $\varphi(u)$ generates an ideal contained in every ideal which meets $\cL(X)$, thus $\varphi(u)\in J_X(M)$.
   
    \ref{Jclass-min-ideal}. First assume that $K(M)\cap\varphi(\cL(X))\neq\varnothing$. Since the elements of $K(M)$ are $\JJ$-minimal in all of $M$, it follows from part \ref{Jclass-J-minimal} that $J_X(M) = K(M)$.

    Assume next that $K(M)\cap\varphi(\cL(X))=\varnothing$. Let $I=\{s\in M \mid MsM\cap\varphi(\cL(X))=\varnothing\}$. It is clear that $I$ is the largest ideal of $M$ which does not meet $\varphi(\cL(X))$. It is non-empty since it contains the minimal ideal $K(M)$ by assumption. Let us show that $M/I$ is a prime monoid (see \cite[Section~1.12]{BerstelPerrinReutenauer2009}). Take $s,t\in M/I$ which are $\neq 0$. Then there exist $x_1,x_2,y_1,y_2$ such that $x_1sx_2 = \varphi(u)$ and $y_1ty_2 = \varphi(v)$, where $u,v\in\cL(X)$. By irreducibility there is $w$ such that $uwv\in\cL(X)$, hence $x_1sx_2\varphi(w)y_1ty_2 = \varphi(uvw)$, and in particular $sx_2\varphi(w)y_1t\neq 0$. This shows that $M/I$ is prime, and thus it admits a unique 0-minimal ideal, by \cite[Proposition 1.12.9]{BerstelPerrinReutenauer2009}. Finally notice that any non-zero element $s$ of the quotient $M/I$ must satisfy $xsy = \varphi(u)$ for some $x,y\in M$, thus it is $\JJ$-above some element of $J_X(M)$. This implies that $J_X(M)\cup\{0\}$ is the 0-minimal ideal of $M/I$.
\end{proof}

\begin{example}\label{ex:Jclass}
    Let $\A$ be the automaton depicted, alongside its transition monoid, in \cref{figureMonoidGolden} and $X$ be the Fibonacci shift (\cref{exampleFibonacci}). Let $\varphi\from A^*\to M$ be the transition morphism of $\A$. Then $\varphi^{-1}(0) = A^*bbA^*$, and since $\cL(X)$ does not contain $bb$, it follows that $J_X(M)$ is the $\JJ$-class of $\alpha=\varphi(a)$. In this example, the automaton $\A$ is the minimal automaton of the $X$-maximal prefix code $C=\{a,ba\}$.
    \begin{figure}
        \centering
        \tikzstyle{loop above}=[in=50,out=130,loop]
            \begin{tikzpicture}
                \node[state](1)at(0,0){$1$};
                \node[state](2)at(2,0){$2$};

                \draw[->,label,loop left,swap](1)edge node{$a$}(1);
                \draw[->,label,bend left](1)edge node{$b$}(2);
                \draw[->,label,bend left](2)edge node{$a$}(1);
            \end{tikzpicture}
            \qquad
            \begin{tikzpicture}
                
                \matrix[] (j1) [dclass]{
                    1\\
                };
                \matrix[right=20pt of j1] (j) [dclass]{
                    $\alpha$ & $\alpha\beta$ \\
                    $\beta\alpha$ & $\beta$ \\
                };
                \drawgrid{2}{2}{j}
                \matrix[right=20pt of j] (j0) [dclass]{
                    $0$ \\
                };
                \node[idempotent] at (j0-1-1.north east) {*};
                \node[idempotent] at (j1-1-1.north east) {*};
                \node[idempotent] at (j-1-1.north east) {*};
                \node[idempotent] at (j-1-2.north east) {*};
                \node[idempotent] at (j-2-1.north east) {*};
            \end{tikzpicture}
            \caption{An automaton $\A$ and the eggbox picture of its transition monoid $M$, where $\alpha$ represents the transformation of the states induced by $a$, and $\beta$ the one induced by $b$.}
            \label{figureMonoidGolden}
        \end{figure}
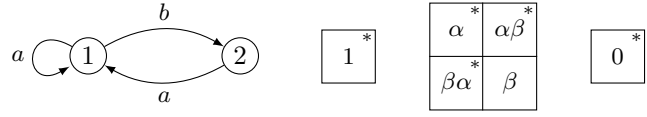
\end{example}

The following result links the $\JJ$-class $J_X(M)$ with the densities of the languages recognized by $M$. Roughly speaking, it shows that the $\JJ$-class $J_X(M)$ concentrates the density. 

\begin{proposition}\label{p:density-concentration}
    Let $\mu$ be an ergodic measure on $A^\Z$ with support $X$. Let $\varphi\from A^*\to M$ be a morphism onto a finite monoid $M$. Then the density of $L = \varphi^{-1}(J_X(M))$ exists in the strong sense and $\delta_\mu(L) = 1$.
\end{proposition}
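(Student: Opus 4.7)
The plan is to sandwich $L=\varphi^{-1}(J_X(M))$ between the preimage of the full $X$-minimal ideal $K_X(M)$, whose density will be $1$ by the results of \cref{sec:idealsfullmonoid}, and a $\mu$-negligible remainder. Concretely, I would write $K_X(M) = J_X(M)\sqcup I_0$ with
\[
    I_0 = K_X(M)\setminus J_X(M) = \{m\in K_X(M) \mid MmM\subsetneq K_X(M)\},
\]
show that $\varphi^{-1}(K_X(M))$ has density $1$ in the strong sense, and show that $\mu(\varphi^{-1}(I_0)\cap A^n) = 0$ for every $n$, so that the density of $L$ inherits the value of the density of $\varphi^{-1}(K_X(M))$.

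For the first piece, since $K_X(M)$ is a two-sided ideal of $M$ meeting $\varphi(\cL(X))$ by \cref{prop:Jclass}, its preimage $\varphi^{-1}(K_X(M))$ is a two-sided ideal of $A^*$ meeting $\cL(X)$, so \cref{proposition2SidedIdeal} gives $\mu(\varphi^{-1}(K_X(M))\cap A^n)\to 1$ in the strong sense. For the second piece, I would first check that $I_0$ is itself a two-sided ideal of $M$: for $m\in I_0$ and $x\in M$ one has $xm\in K_X(M)$ and $M(xm)M\subseteq MmM\subsetneq K_X(M)$, hence $xm\in I_0$, and symmetrically on the right. Next, if some $m\in I_0$ had $MmM\cap\varphi(\cL(X))\ne\varnothing$, then $MmM$ would be an ideal meeting $\varphi(\cL(X))$ and strictly contained in $K_X(M)$, contradicting the minimality in the definition of $K_X(M)$. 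Therefore $I_0\cap\varphi(\cL(X)) = \varnothing$, so every $w\in\varphi^{-1}(I_0)$ lies outside $\cL(X)$ and satisfies $\mu(w) = \mu([w]_X) = 0$.

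Combining these two pieces: from the disjoint decomposition $\varphi^{-1}(K_X(M)) = L\sqcup\varphi^{-1}(I_0)$ and $\mu(\varphi^{-1}(I_0)\cap A^n) = 0$ for all $n$, I conclude $\mu(L\cap A^n) = \mu(\varphi^{-1}(K_X(M))\cap A^n)\to 1$, which is the claim in the strong sense. The only content that goes beyond invoking \cref{proposition2SidedIdeal} is the verification that $I_0$ is a two-sided ideal of $M$ disjoint from $\varphi(\cL(X))$; this is the step where the full minimality of $K_X(M)$ (not just its status as an ideal meeting $\varphi(\cL(X))$) is used, and it is the main point to get right.
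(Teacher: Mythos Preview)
Your proof is correct and follows essentially the same route as the paper: both reduce to \cref{proposition2SidedIdeal} applied to $\varphi^{-1}(K_X(M))$ after observing that $K_X(M)\setminus J_X(M)$ avoids $\varphi(\cL(X))$. The paper simply cites \cref{prop:Jclass}\ref{Jclass-intersect} for the equality $K_X(M)\cap\varphi(\cL(X)) = J_X(M)\cap\varphi(\cL(X))$, whereas you reprove it inline via the minimality of $K_X(M)$; your additional check that $I_0$ is itself an ideal is true but not actually needed for the argument.
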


\begin{proof}
    It follows from \cref{prop:Jclass} that $K_X(M)\cap\varphi(\cL(X)) =  J_X(M)\cap\varphi(\cL(X))$, hence $\delta_\mu(\varphi^{-1}(J_X(M)))=\delta_\mu(\varphi^{-1}(K_X(M)))$. But the density of $K_X(M)$ exists in the strong sense and is $1$ by \cref{proposition2SidedIdeal}.
\end{proof}

Let us highlight a simple consequence of this.
\begin{corollary}\label{c:density-0}
    Let $\mu$ be an ergodic measure with support a shift space $X$. Let $\varphi\from A^*\to M$ be a morphism onto a finite monoid $M$. For every $m\notin J_X(M)$, the density of $\varphi^{-1}(m)$ exists in the strong sense and is 0. 
\end{corollary}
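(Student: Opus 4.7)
The plan is to squeeze the mass of the fibres outside $J_X(M)$ to zero using the fact that, by \cref{p:density-concentration}, the fibre over $J_X(M)$ already absorbs all the mass at each length, in the strong sense. Since $M$ is finite, the sets $\varphi^{-1}(m)$ for $m\in M$ form a finite partition of $A^*$, so it will suffice to control the complement $A^*\setminus\varphi^{-1}(J_X(M))$.

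First, I would record that $\mu(A^n)=1$ for every $n\ge 0$. This follows by induction from $\mu(\varepsilon)=1$ and the identity $\sum_{a\in A}\mu(wa)=\mu(w)$ recalled in \cref{subsec:defsubshift}, giving $\sum_{w\in A^n}\mu(w)=1$; equivalently, the length-$n$ cylinders $[w]_X$ for $w\in\cL_n(X)$ partition $X$ up to $\mu$-null sets, so their masses sum to $1$.

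Next, splitting $A^n$ as the disjoint union of $\varphi^{-1}(J_X(M))\cap A^n$ and its complement inside $A^n$, I get
\[
    \mu\bigl((A^*\setminus\varphi^{-1}(J_X(M)))\cap A^n\bigr) \;=\; 1 - \mu\bigl(\varphi^{-1}(J_X(M))\cap A^n\bigr).
\]
By \cref{p:density-concentration}, the subtracted term tends to $1$ as $n\to\infty$, so the left-hand side tends to $0$. For any $m\notin J_X(M)$, the inclusion $\varphi^{-1}(m)\subseteq A^*\setminus\varphi^{-1}(J_X(M))$ then yields the sandwich
\[
    0 \;\le\; \mu(\varphi^{-1}(m)\cap A^n) \;\le\; \mu\bigl((A^*\setminus\varphi^{-1}(J_X(M)))\cap A^n\bigr) \;\longrightarrow\; 0,
\]
which is precisely $\delta_\mu(\varphi^{-1}(m))=0$ in the strong sense.

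There is no real obstacle: the argument is an elementary bookkeeping over the finite partition of $A^*$ by the fibres of $\varphi$, combined with the fact that \cref{p:density-concentration} was already proved in the strong sense (so the squeeze above transfers to the strong sense for each individual fibre, not merely in Cesàro average).
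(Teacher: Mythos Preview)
Your proof is correct and follows essentially the same approach as the paper: both use \cref{p:density-concentration} to show that the complement $A^*\setminus\varphi^{-1}(J_X(M))$ has vanishing mass at each length, then squeeze $\varphi^{-1}(m)$ below it. The paper's version is just a one-line compression of your argument, omitting the explicit justification that $\mu(A^n)=1$.
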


\begin{proof}
    Indeed, let $L = \varphi^{-1}(m)$, $L'=\varphi^{-1}(J_X(M))$. Then as $L\subseteq A^*\setminus L'$ and the density of $L'$ has density 1 in the strong sense,
    \begin{equation*}
        0= 1-\lim_{n\to\infty}\mu(L'\cap A^n) =\lim_{n\to\infty}\mu(A^n\setminus L') \geq  \lim_{n\to\infty}\mu(L\cap A^n) = \delta_\mu(L).\qedhere
    \end{equation*}
\end{proof}

\subsection{On aperiodic languages} \label{subsec:aperiod}
The next theorem concerns the density of aperiodic languages. Recall that a language is \emph{aperiodic} if it can be recognized by an aperiodic monoid. By Schützenberger's theorem~\cite{Schutzenberger1965a}, aperiodic languages are precisely the \emph{star-free} languages, as well as the languages defined by first-order formulas (see \cite[Theorem~4.1]{book/Perrin2004}). The density of aperiodic languages was also studied, from a logic perspective, by Lynch~\cite{Lynch1993}. Lynch's definition of density involves sequences of Bernoulli measures, thus it is both more general than our definition (since we use only one measure) and much more particular, since we use more general measures than Bernoulli measures.

\begin{theorem}\label{theoremDensityAperiodic}
    An aperiodic language $L$ has a density with respect to an ergodic measure $\mu$.  More precisely, there exist a finite number of pairs $(D_i,G_i)_{1\le i\le k}$ of a prefix code and a suffix code such that $L\cap\cL(X)=\bigcup_{i=1}^k(D_iA^*\cap A^*G_i)\cap \cL(X)$ and
    \begin{equation}
        \delta_\mu(L)=\sum_{i=1}^k\mu(D_i)\mu(G_i).\label{eqDensityAperiodic}
    \end{equation}
    Moreover, if the measure is mixing, then the density exists in the strong sense.
\end{theorem}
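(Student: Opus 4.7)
Fix a surjective morphism $\varphi\colon A^*\to M$ onto a finite aperiodic monoid recognizing $L$, so $L=\varphi^{-1}(N)$ with $N=\varphi(L)$. The key feature to exploit is that aperiodicity forces every $\mathscr{H}$-class of $M$ to be a singleton, i.e.\ $H_m=\{m\}$ for every $m\in M$. The plan is to decompose $L$ as the disjoint union $\bigsqcup_{m\in N}\varphi^{-1}(m)$ and compute the density of each fibre using the quasi-ideal machinery of \cref{propositionQuasiIdeal}. By \cref{c:density-0} the fibres indexed by $m\notin J_X(M)$ have strong density zero, so only the fibres with $m\in N\cap J_X(M)$ actually contribute to $\delta_\mu(L)$.

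For each such $m$, I would let $D_m$ be the minimal prefix-code generator of the right ideal $\varphi^{-1}(mM)$ and $G_m$ the minimal suffix-code generator of the left ideal $\varphi^{-1}(Mm)$, so that $D_mA^*\cap A^*G_m=\varphi^{-1}(mM\cap Mm)$. The heart of the argument is the identity
\[
    (D_mA^*\cap A^*G_m)\cap\cL(X)=\varphi^{-1}(m)\cap\cL(X).
\]
To prove it, aperiodicity gives the disjoint decomposition $mM\cap Mm=\{m\}\cup\{n\in M:n<_{\JJ}m\}$, and since $m\in J_X(M)$ is $\leq_{\JJ}$-minimal in $\varphi(\cL(X))$ by part~\ref{Jclass-J-minimal} of \cref{prop:Jclass}, any element $n<_{\JJ}m$ generates an ideal strictly inside $K_X(M)$ and hence cannot meet $\varphi(\cL(X))$. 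Then \cref{propositionQuasiIdeal} gives $\delta_\mu(\varphi^{-1}(m))=\mu(D_m)\mu(G_m)$, in the strong sense whenever $\mu$ is mixing.

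Finite additivity along the disjoint decomposition of $L$, combined with the vanishing contributions from $m\in N\setminus J_X(M)$, yields the density formula~\eqref{eqDensityAperiodic} with the family $(D_m,G_m)_{m\in N\cap J_X(M)}$. For the set-theoretic decomposition, the chosen quasi-ideals already cover every $w\in L\cap\cL(X)$ with $\varphi(w)\in J_X(M)$; the remaining words form a set of density zero by \cref{c:density-0}, and I would absorb them by adjoining finitely many auxiliary pairs whose quasi-ideals cover them but contribute nothing to the sum. The main obstacle is the key identity above, which depends decisively on $H_m=\{m\}$: without aperiodicity, $H_m$ could be a nontrivial group and $\varphi^{-1}(m)$ would be only one fibre among several packaged into $\varphi^{-1}(mM\cap Mm)$, breaking the clean reduction to a single quasi-ideal and motivating the more elaborate construction needed in the general case.
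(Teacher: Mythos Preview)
Your argument is correct and matches the paper's: reduce to fibres $\varphi^{-1}(m)$, dispose of $m\notin J_X(M)$ via \cref{c:density-0}, and for $m\in J_X(M)$ use aperiodicity (trivial $\HH$-classes) together with \cref{prop:Jclass} to identify $\varphi^{-1}(m)\cap\cL(X)$ with a quasi-ideal on $\cL(X)$, then invoke \cref{propositionQuasiIdeal}. Two small remarks: the claimed equality $mM\cap Mm=\{m\}\cup\{n:n<_{\JJ}m\}$ is only the inclusion $\subseteq$ in general (not every $n<_{\JJ}m$ lies in $mM\cap Mm$), but that inclusion is all you actually use; and your hand-wave about adjoining ``auxiliary pairs'' to cover the words with $\varphi(w)\notin J_X(M)$ in the set-theoretic decomposition mirrors a point the paper's proof also leaves implicit.
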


\begin{proof}
    Let $\mu$ be an ergodic measure and $\varphi\colon A^*\to M$ be a morphism onto an aperiodic monoid $M$. Let $J=J_X(M)$, let $m\in M$ and let $L=\varphi^{-1}(m)$. It is enough to prove that Equation \eqref{eqDensityAperiodic}
    holds for such $L$. If $m\notin J$, then $\delta_\mu(L)=0$ in the strong sense by \cref{c:density-0}, so we may assume that $m\in J$.  We claim that 
    \[
        L\cap \cL(X)=LA^*\cap A^*L\cap \cL(X).
    \]
    Indeed, assume that $u\in(LA^*\cap A^*L)\cap\cL(X)$. Then $\varphi(u) \in mM \subseteq K_X(M)M = K_X(M)$, and thus $\varphi(u)\in J$ by the second part of \cref{prop:Jclass}. But then $\varphi(u)\in mM\cap Mm\cap J$ and since $M$ is aperiodic, we have $mM\cap Mm\cap J = \{m\}$. It follows that $\varphi(u)=m$ and $u\in L$. This concludes the proof of the claim, as the other inclusion is trivial. Finally, we can use \cref{propositionQuasiIdeal} to deduce that $\delta_\mu(L)=\mu(D)\mu(G)$, where $D=L\setminus LA^+$ and $G=L\setminus A^+L$, and that the density exists in the strong sense if $\mu$ is mixing. 
\end{proof}

Let us give two examples to illustrate   \cref{theoremDensityAperiodic}, starting with an aperiodic language.
\begin{example}\label{example(ab+ba)*}
    Let $L=\{ab,ba\}^*$, whose minimal automaton $\A$ may be found in \cite[Chapter~4, Example~2.1]{book/Pin1986}.  Let $M$ stand for its transition monoid. Note that $L$ is aperiodic and thus star-free, which is not obvious. A star-free expression of $L$ may be found in \cite[Chapter~4, Example~2.1]{book/Pin1986}. Let $X$ be Thue--Morse shift and let $\mu$ be the unique invariant measure on $X$ (\cref{exampleMorse}). The $\JJ$-class $J_X(M)$ is the $\JJ$-class of $\alpha^2$ in \cite[Chapter~4, Example~2.1]{book/Pin1986}.
    Combining Equation \eqref{eqDensityAperiodic} with the expression of $L$ in \cite[Chapter~4, Example~2.1]{book/Pin1986}, we get 
    \begin{equation*}
        \delta_\mu(L)=\mu((ab)^+b\cup (ba)^+a)\mu(a(ab)^+\cup b(ba)^+).
    \end{equation*}
    With the values of $\mu$ from \cite[Example 3.8.20]{DurandPerrin2021}, we find 
    \begin{equation*}
        \mu((ab)^+b\cup (ba)^+a)=\mu(\{abb,ababb,baa,babaa\})=1/2
    \end{equation*}
    and similarly $\mu(a(ab)^+\cup b(ab)^+)=1/2$. Thus $\delta_\mu(L)=1/4$.
\end{example}

The next example is a group language whose intersection with the shift space behaves like an aperiodic language. 

\begin{example}\label{exampleKarl}
    Let $X$ be the orbit of the periodic sequence $x=(abc)^\infty$. Thus $X$ consists of three elements and has for unique ergodic measure the uniform probability measure $\mu$. Let $\varphi\colon A^*\to\Z/2\Z$ be defined by $\varphi(a)=0$ and $\varphi(b)=\varphi(c)=1$. Consider the rational language $L=\varphi^{-1}(0)$. Then one sees
    \begin{displaymath}
        L\cap\cL(X)=(abc)^*\{\varepsilon,a\}\cup(bca)^*\{\varepsilon,bc\}\cup (cab)^*\{\varepsilon\},
    \end{displaymath}
    and thus $\delta_\mu(L)=\frac{1}{3}\left(1+\frac{1}{3}+\frac{1}{3}\right)=\frac{5}{9}$.

    The same result can be obtained using \cref{theoremDensityAperiodic}. First, observe that $L$ has the same intersection with $\cL(X)$ as the language $B^*$ where $B=\{a,bc,cab\}$. The language $B^*$ is recognized by the automaton $\A$ depicted in \cref{figureExampleKarl}. The transition monoid of $\A$ is an aperiodic monoid $M$ with 27 elements, hence  $B^* $ turns out to be  star-free. The $\JJ$-class $J_X(M)$, depicted in \cref{figureExampleKarl}, consists of 9 elements, 5 of which belong to the image of $B^*$ (indicated in yellow). Taking for instance $m = \beta\gamma=\varphi(bc)$, we find that $\varphi^{-1}(m)A^* = bcA^*$ and $A^*\varphi^{-1}(m) = A^*bc$, and thus $\delta_\mu(\varphi^{-1}(m))=\mu(bc)^2=1/9$ by Equation \eqref{eqDensityAperiodic}; likewise $\delta_\mu(\varphi^{-1}(m))=1/9$ for all other elements $m\in J_X(M)$ using Equation \eqref{eqDensityAperiodic}. Thus we recover the fact that $\delta_\mu(B^*)=\delta_{\mu}(B^* \cap \cL(X))=\delta_{\mu}(L \cap \cL(X))=5/9$.

    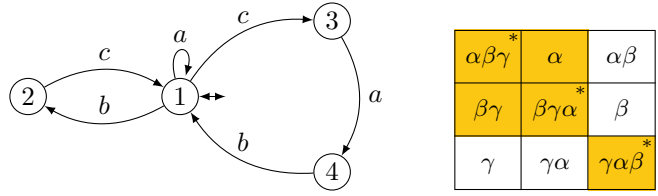
\begin{figure}
      \centering
        \begin{tikzpicture}
          \node[state](1)at(0,0){$1$};\draw[<->](1)edge node{}(0.6,0);
          \node[state](2)at(-2,0){$2$};
          \node[state](3)at(2,1){$3$};
          \node[state](4)at(2,-1){$4$};

          \draw[above,->](1)edge [loop above]node{$a$}(1);
          \draw[above,->,bend left](1)edge node{$b$}(2);
          \draw[above,->,bend left](2)edge node{$c$}(1);
          \draw[above,->,bend left](1)edge node{$c$}(3);
          \draw[right,->,bend left](3)edge node{$a$}(4);
          \draw[above,->,bend left](4)edge node{$b$}(1);
        \end{tikzpicture}
        \qquad
        \begin{tikzpicture}
            \matrix[dclass,minimum height=20pt,minimum width=25pt] (j){
                \strut$\alpha\beta\gamma$ & \strut$\alpha$ & \strut$\alpha\beta$ \\
                $\beta\gamma$  & $\beta\gamma\alpha$ & $\beta$ \\ 
                \strut$\gamma$  & \strut$\gamma\alpha$ & \strut$\gamma\alpha\beta$ \\ 
            };
            \drawgrid{3}{3}{j}
            \node[idempotent] at (j-1-1.north east) {*};
            \node[idempotent] at (j-2-2.north east) {*};
            \node[idempotent] at (j-3-3.north east) {*};
            \scoped[on background layer]
            {
                \node[fill=lipicsYellow,fit=(j-1-1),inner sep=0pt]   {};
                \node[fill=lipicsYellow,fit=(j-1-2),inner sep=0pt]   {};
                \node[fill=lipicsYellow,fit=(j-2-1),inner sep=0pt]   {};
                \node[fill=lipicsYellow,fit=(j-2-2),inner sep=0pt]   {};
                \node[fill=lipicsYellow,fit=(j-3-3),inner sep=0pt]   {};
            }
        \end{tikzpicture}
        \caption{The minimal automaton of $B^*$ and the $\JJ$-class $J_X(M)$ from \cref{exampleKarl}, where $\alpha$, $\beta$ and $\gamma$ are the transformations induced respectively by $a$, $b$ and $c$. The $\HH$-classes in yellow represent elements in the image of $B^*$.}\label{figureExampleKarl}
    \end{figure}
 \end{example}
 
\subsection{Decidability and degree}

Next we consider the question of whether membership in $J_X(M)$ is decidable. We say that a shift space $X$ is \emph{rationally decidable} if the emptiness of $\cL(X)\cap L$ is decidable for every rational language $L$. Sofic shifts are obviously rationally decidable. The class of rationally decidable shift spaces also includes all substitution shifts by a result of~\cite[Lemma 3]{Salo2017} (see also~\cite[Lemma 3.15]{BealPerrinRestivo2024a} and \cite{carton02}).
\begin{proposition} \label{prop:decidable}
    Let $X$ be a rationally decidable shift space. Then for every morphism $\varphi\colon A^*\to M$, there is an algorithm which decides which elements $m\in M$ are in $J_X(M)$.
\end{proposition}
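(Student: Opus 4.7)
The plan is to give a direct algorithm based on the characterizations in \cref{prop:Jclass}. The key preliminary observation is that for each $m\in M$, the preimage $\varphi^{-1}(m)\subseteq A^*$ is a rational language: indeed, the deterministic automaton with state set $M$, initial state $1_M$, unique final state $m$, and transitions $n\cdot a=n\varphi(a)$ recognizes it. Combined with rational decidability of $X$, this lets us decide, for each individual $m\in M$, whether $\varphi^{-1}(m)\cap\cL(X)=\varnothing$, that is, whether $m\in\varphi(\cL(X))$.

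Since $M$ is finite, the first step of the algorithm scans the elements of $M$ and, using the above test, computes the full subset $T=\varphi(\cL(X))\subseteq M$. All remaining steps take place inside the finite monoid $M$ and are therefore effective: for every $m\in M$ one can compute the principal ideal $MmM$ by brute force and thus evaluate the quasi-order $\leq_{\JJ}$.

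From here, one can extract $J_X(M)$ in (at least) two equivalent ways. The cleanest uses item~\ref{Jclass-J-minimal} of \cref{prop:Jclass}: select the $\leq_{\JJ}$-minimal elements of $T$, pick any such element $m_0$, and return its $\JJ$-class. Alternatively, following the definition, one computes $K_X(M)=\bigcap\{MmM\mid m\in T\}$ (a finite intersection of ideals of $M$) and returns $\{m\in M\mid MmM=K_X(M)\}$; correctness is immediate from the definition of $J_X(M)$.

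The main conceptual step is the first one: reducing the decidability question to finitely many instances of the emptiness problem for $\cL(X)\cap L$ with $L$ rational, via the rationality of the fibers $\varphi^{-1}(m)$. Once this reduction is in place, everything else is a finite computation in $M$ whose correctness is guaranteed by \cref{prop:Jclass}. No additional subtlety is required, and the argument works uniformly for sofic shifts, substitution shifts, and any other rationally decidable class.
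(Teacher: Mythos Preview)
Your proposal is correct and follows essentially the same approach as the paper. Both arguments reduce the problem to deciding, for each $n\in M$, whether $\varphi^{-1}(n)\cap\cL(X)\ne\varnothing$ (rationality of $\varphi^{-1}(n)$ plus rational decidability of $X$), and then compute $K_X(M)=\bigcap_{n\in\varphi(\cL(X))}MnM$ inside the finite monoid; your ``alternative'' route is exactly the paper's proof, just spelled out in more detail.

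One small caveat: your first extraction method, via item~\ref{Jclass-J-minimal} of \cref{prop:Jclass} (picking a $\leq_{\JJ}$-minimal element of $T$ and returning its $\JJ$-class), relies on all such minimal elements being $\JJ$-equivalent, which that proposition establishes only under the assumption that $X$ is irreducible---an assumption not present in the statement of \cref{prop:decidable}. Your alternative method via the explicit intersection avoids this issue and works in full generality, so the proof as a whole stands.
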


\begin{proof}
    We have $m\in K_X(M)$ if and only if $m\in MnM$ for every $n\in M$ such that $\varphi^{-1}(n)\cap \cL(X)\ne\varnothing$. Since $X$ is rationally decidable, this is decidable for every $n\in M$. Therefore, we can decide whether $m\in K_X(M)$, and then whether $m\in J_X(M)$.
\end{proof}

The $\JJ$-class is also linked with the following notion of degree. Let $\A$ be an automaton on $A$ and $\varphi\from A^*\to M$ be its transition morphism. The minimal rank of the elements of $\varphi(\cL(X))$ viewed as partial mappings is called the \emph{$X$-degree} of the automaton, denoted $d_X(\A)$. Note that \cite[Proposition~3.2]{Perrin2015} states that $J_X(M)$ contains all elements of rank $d_X(\A)$. For instance the automaton $\A$ from \cref{exampleKarl} has $X$-degree 2, while the one in \cref{ex:Jclass} has $X$-degree 1. This result also implies that $d_X(\A)$ is computable if and only if membership in $J_X(M)$ is decidable (cf.  \cref{prop:decidable} for cases whether the latter is decidable).

The notion of $X$-degree has also been studied in relation with codes~\cite{BerstelDeFelicePerrinReutenauerRindone2012,Almeida2020}. Let $X$ be an irreducible shift, let $C\subseteq\cL(X)$ be a rational prefix code and let $\A$ be the minimal automaton of $C^*$.  If $C$ is $X$-maximal, then its $X$-degree (that is, $d_X(\A)$) is larger than or equal to $1$. When $C$ is a finite $X$-maximal bifix code (i.e. both prefix and suffix),  then the $X$-degree of $C$ can also be defined in terms of the number of \emph{parses} of elements of  $\cL(X)$~\cite{BerstelDeFelicePerrinReutenauerRindone2012}. 

\section{Existence of  the density}
\label{sec:existence}
    
The goal of this section is to establish the existence of density of regular languages under all invariant measures. More precisely, we will prove the following, which is our main theorem. 
\begin{theorem}\label{theoremMain}
    Let $\mu$ be an invariant measure on $A^\Z$. Then every rational language $L$ has a density with respect to $\mu$. 
\end{theorem}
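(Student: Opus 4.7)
The plan is to combine three ingredients: the ergodic decomposition to pass from arbitrary invariant measures to ergodic ones; the $\JJ$-class machinery of \cref{sec:Jclass} to reduce to densities of preimages of single monoid elements lying in $J_X(M)$; and a skew product construction on which Birkhoff's ergodic theorem yields the Cesàro limit.

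First I would invoke the ergodic decomposition \eqref{eqErgodicDecomposition}, writing $\mu = \int_{\mathcal E}\nu\, d\tau(\nu)$. Since $\nu\mapsto\nu(L\cap A^i)$ is measurable and bounded by $1$, Fubini gives
\[
    \frac{1}{n}\sum_{i=0}^{n-1}\mu(L\cap A^i) = \int_{\mathcal E} \frac{1}{n}\sum_{i=0}^{n-1}\nu(L\cap A^i)\, d\tau(\nu),
\]
and by bounded convergence, convergence of the inner Cesàro averages for $\tau$-a.e.\ $\nu$ implies convergence of the outer one. Hence it suffices to handle ergodic $\mu$.

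Next, fix an ergodic $\mu$ with support $X$ and a surjective morphism $\varphi\colon A^*\to M$ onto a finite monoid $M$ recognizing $L$, so that $L = \bigsqcup_{m\in P}\varphi^{-1}(m)$ for some $P\subseteq M$. Finite additivity of the density reduces the problem to showing that $\delta_\mu(\varphi^{-1}(m))$ exists for every $m\in M$. By \cref{c:density-0}, this density is $0$ in the strong sense whenever $m\notin J_X(M)$, so I may further assume $m\in J_X(M)$.

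For such an $m$, let $R$ be its $\RR$-class. I would construct the skew product $T\colon X\times R\to X\times R$ induced by the right action of letters on $R$, as used for \cref{theoremWeightedCountingMeasure}, together with a $T$-invariant probability measure $\hat\mu$ on $X\times R$ (the weighted counting measure) that projects to $\mu$. The Cesàro average $\frac{1}{n}\sum_{i=0}^{n-1}\mu(\varphi^{-1}(m)\cap A^i)$ can then be written as the $\mu$-expectation of a Birkhoff average of an indicator function under $T$. Decomposing $\hat\mu$ into its ergodic components and applying Birkhoff's theorem to each yields pointwise convergence of these averages almost everywhere; bounded convergence then transfers convergence to the integrals, producing $\delta_\mu(\varphi^{-1}(m))$.

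The main obstacle lies in this third step: constructing the weighted counting measure, verifying $T$-invariance, and justifying that $\mu$-typical trajectories $\bigl(S^i x,\, \varphi(x_0\cdots x_{i-1})\bigr)$ are $\hat\mu$-generic, so that Birkhoff averages along them match the ones guaranteed by the ergodic theorem. The difficulty is that the natural starting distribution $\mu\otimes\delta_1$ is not $T$-invariant, so the argument must exploit $\JJ$-minimality of $J_X(M)$—namely, that almost every orbit enters $X\times R$ in finite time and stays compatible with $\hat\mu$ thereafter—before Birkhoff can be applied cleanly.
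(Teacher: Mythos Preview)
Your high-level architecture (ergodic decomposition, reduction to $m\in J_X(M)$ via \cref{c:density-0}, skew product over an $\RR$-class) matches the paper exactly, and your first two steps are fine. The gap is in the third step, and it is not quite the obstacle you name. You propose to express $\mu(\varphi^{-1}(m)\cap A^i)$ as the $\mu$-expectation of a Birkhoff average of an indicator on the skew product, but the skew product records only $r\cdot\varphi(x_0\cdots x_{i-1})$, not $\varphi(x_0\cdots x_{i-1})$ itself; knowing the former equals $rm$ does \emph{not} by itself force the latter to equal $m$, since $\varphi(x_0\cdots x_{i-1})$ need not lie in $J_X(M)$. The issue is therefore algebraic, not a matter of genericity of orbits. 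The paper resolves it by introducing the prefix code $C$ with $\varphi^{-1}(m)A^*=CA^*$ and proving, via stability and Green's lemma, the identity
\[
    \{r\}\times[L\cap A^i]_X \;=\; \bigl(\{r\}\times[C_{\le i}]_X\bigr)\cap T^{-i}\bigl(\{rm\}\times X\bigr)
\]
for each $r$ with $rm\in R$; the extra hypothesis $x\in[C_{\le i}]_X$ is exactly what pins $\varphi(x_0\cdots x_{i-1})$ down to the $\RR$-class of $m$, after which left multiplication by $r$ becomes injective. This identity lets one write $\mu(L\cap A^i)$ as a finite sum of terms $\bar\mu(U\cap T^{-i}V)$ and then apply the ergodic characterization \eqref{eqErgodic} directly (with a small $\epsilon$-argument to replace $C_{\le i}$ by $C$).

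A second, smaller divergence: you propose to decompose the weighted counting measure $\hat\mu$ into ergodic components and apply Birkhoff to each. The paper instead uses a Krein--Milman argument (\cref{l:existence-project}) to select a \emph{single} ergodic measure $\bar\mu$ on the skew product with $\bar\mu\circ\pi_X^{-1}=\mu$ and $\bar\mu(\{0\}\times X)=0$; the weighted counting measure enters only to show that the convex set of such measures is nonempty. This sidesteps any need to track how the ergodic components of $\hat\mu$ project to $X$, and avoids the orbit-genericity issue you flag entirely, since \eqref{eqErgodic} is a statement about measures of sets, not about pointwise Birkhoff averages.
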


The main ingredient for the proof is the dynamical system defined as follows, called a skew product. Let $X$ be an irreducible shift space and $\varphi\colon A^*\to M$ a morphism onto a finite monoid. Fix $R$ an $\RR$-class of $J_X(M)$ such that $R\cap\varphi(\cL(X))\neq\varnothing$. Let $M$ act on the right of $R\cup\{0\}$ by $r\cdot m = rm$ if $rm\in R$ and $r\cdot m=0$ otherwise. 
The  specific choice  of  an $\RR$-class class $R$ plays no role here, as long as $R$ satisfies $R\cap\varphi(\cL(X))\neq\varnothing$.

\begin{definition}
    The skew product $R\cup\{0\}\rtimes X$ is the system $((R\cup\{0\})\times X,T)$ where  
    \begin{equation*}
        T(r,x) = (r\cdot \varphi(x_0), Sx).
    \end{equation*}
\end{definition}

Note that the projection $\pi_X\from(R\cup\{0\})\times X\to X$ satisfies $\pi_X\circ T = S\circ\pi_X$. It follows that if $\nu$ is an invariant (respectively ergodic) measure on $(R\cup\{0\})\rtimes X$, then $\nu\circ\pi_X^{-1}$ is an invariant (respectively ergodic) measure on $X$.  When $M=G$ is a group, then $R=G$ and $G\rtimes X$ forms a subsystem of $(G\cup\{0\})\rtimes X$, which in effect means we can get rid of 0. In particular we recover the type of skew products studied in the preprint~\cite{BertheGouletOuelletNybergBroddaPerrinPetersen2024}. Next, we introduce a natural probability measure on $(R\cup\{0\})\rtimes X$ induced by a measure $\mu$ on $X$. In the group case, we recover the product of $\mu$ with the normalized counting measure on the group considered in~\cite{BertheGouletOuelletNybergBroddaPerrinPetersen2024}.

\begin{definition}
    Fix an invariant probability measure $\mu$ on $X$. The \emph{weighted counting measure} on the skew product $(R\cup\{0\})\rtimes X$ is the Borel measure $\nu$ defined by $\nu(\{0\}\times X)=0$, and for every $u,v\in\cL(X)$ and $r\in R$,
    \[
        \nu(\{r\}\times [u\cdot v]_X)=\frac1d\sum_{s,s\varphi(u)=r}\mu(G_suv),
    \]
    where $d$ is the cardinality of the $\HH$-classes of $J_X(M)$ and $G_s$, for $s\in R$, is the suffix code such that $\varphi^{-1}(Ms)=A^*G_s$.
\end{definition}

That $\nu$ is a well-defined Borel measure on $(R\cup\{0\})\rtimes X$ follows from Carathéodory's extension theorem, since the above formula defines a pre-measure on the Boolean algebra of clopen sets of $(R\cup\{0\})\times X$. Note that if $r\HH s$, then $G_r = G_s$. For an $\HH$-class $H\subseteq R$, let $G_H$ be the common value of $G_r$ for $r\in H$. The following technical lemma will be useful. In its proof, we use the following notion: we call a monoid $M$ \emph{stable} if for every $s,t\in M$, the implications 
    $s\JJ st \implies s\RR st$ and 
    $s\JJ ts \implies s\LL ts$
hold. Every finite monoid is stable~\cite[Lemma 1.1, Chapter V]{book/Grillet1995}.

\begin{lemma}\label{lemmaGr}
    The union $G = \bigcup_{H\subseteq R}G_H$ over all $\HH$-classes of $R$ is a suffix code such that $\mu(G)=1$.
\end{lemma}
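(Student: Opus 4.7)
The plan is to assign to $\mu$-almost every $x \in X$ a unique suffix $x_{-n} \cdots x_{-1}$ belonging to exactly one $G_H$; both conclusions of the lemma follow at once.

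First I would show that $J_X(M)$ is a regular $\JJ$-class with fully-realized eggbox. By the last part of \cref{prop:Jclass}, either $J_X(M) = K(M)$ (the minimal ideal, completely simple) or $J_X(M) \cup \{0\}$ is the $0$-minimal ideal of the prime quotient $M/I$ and is therefore completely $0$-simple; either way, every $\LL$-class of $J_X(M)$ meets $R$ in a unique $\HH$-class. Combining \cref{p:density-concentration} with shift invariance of $\mu$ gives $\mu\bigl(\{x : \varphi(x_{-n}\cdots x_{-1}) \in J_X(M)\}\bigr) \to 1$; moreover, once $\varphi(x_{-n}\cdots x_{-1}) \in J_X(M)$, prepending letters in $\cL(X)$ keeps the image inside $\varphi(\cL(X)) \cap K_X(M) = \varphi(\cL(X)) \cap J_X(M)$ by the second part of \cref{prop:Jclass}, so this event is monotone in $n$. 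Hence $n(x) := \min\{n \geq 1 : \varphi(x_{-n}\cdots x_{-1}) \in J_X(M)\}$ is defined almost surely.

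For such $x$, let $m(x) := \varphi(x_{-n(x)}\cdots x_{-1})$ and let $L(x)$ denote its $\LL$-class in $J_X(M)$; set $H(x) := L(x) \cap R$, a single $\HH$-class. Picking $s \in H(x)$ gives $m(x) \in L_s \subseteq Ms$, so $x_{-n(x)}\cdots x_{-1} \in \varphi^{-1}(Ms)$; any strictly shorter suffix lies in $\cL(X)$ with image outside $J_X(M)$, hence outside $K_X(M) \supseteq Ms$ (again by the second part of \cref{prop:Jclass}), showing $x_{-n(x)}\cdots x_{-1} \in G_{H(x)}$. Uniqueness of $H(x)$ relies on stability of the finite monoid $M$: if $m(x) \in Ms'$ for some $s' \in H' \subseteq R$, then $m(x) \leq_{\LL} s'$ combined with $m(x) \JJ s'$ forces $m(x) \LL s'$, so $L(x) = L_{s'}$ and $H' = H(x)$. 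A parallel stability argument rules out any longer suffix $x_{-m}\cdots x_{-1}$ ($m > n(x)$) lying in any $G_{H'}$.

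Setting $E_H := \{x \in X : \exists n \geq 1, \; x_{-n}\cdots x_{-1} \in G_H\}$, the above shows $(E_H)_H$ is a $\mu$-almost-sure partition of $X$; since each $G_H$ is itself a suffix code, the union defining $E_H$ is disjoint across $n$, and shift invariance yields $\mu(E_H) = \sum_{w \in G_H} \mu([w \cdot \varepsilon]_X) = \mu(G_H)$. Summing over $H$ gives $\mu(G) = \sum_H \mu(G_H) = 1$. For the suffix code property, a proper-suffix conflict between $w \in G_H$ and $w' \in G_{H'}$ with $H \neq H'$ would put $[w' \cdot \varepsilon]_X \subseteq E_H \cap E_{H'}$, a $\mu$-null set, forcing $\mu(w') = 0$; thus modulo $\mu$-null words, $G$ is a suffix code. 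The main obstacle I anticipate is verifying the fully-realized eggbox structure of $J_X(M)$ and correctly deploying stability to pin down the $\HH$-class $H(x)$ associated with each $x$, translating the algebraic uniqueness into the language-level statement about suffixes.
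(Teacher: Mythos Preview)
Your approach is correct and genuinely different from the paper's, though your final phrasing undersells what you have actually proved.

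The paper's proof is much shorter: it argues that $G$ is an $X$-thin, $X$-maximal suffix code and then invokes \cref{propositionThinSuff} directly. $X$-maximality follows from irreducibility of $X$ (every word of $\cL(X)$ can be extended on the left to fall in $J_X(M)$), and $X$-thinness because any word of $\cL(X)$ whose image already lies in $J_X(M)$ cannot be a proper factor of an element of $G$. The measure-one conclusion is then packaged into the earlier proposition on thin codes. Your argument is more hands-on: you locate for each $x$ the shortest left tail landing in $J_X(M)$ and verify via stability that it determines a unique word in a unique $G_H$, giving simultaneously the almost-sure partition into the $E_H$ and $\mu(G)=1$. An advantage of your route is that the role of stability is fully spelled out, whereas the paper asserts the suffix-code property as ``a consequence of stability'' without details.

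Where you undersell: you conclude only that $G$ is a suffix code ``modulo $\mu$-null words''. In fact your uniqueness argument needs no measure at all. Given $w,w'\in G\cap\cL(X)$ with $w$ a proper suffix of $w'$, pick any $x\in X$ with $x_{-|w'|}\cdots x_{-1}=w'$; then $n(x)\le|w'|$ holds deterministically (since $\varphi(w')\in K_X(M)\cap\varphi(\cL(X))\subseteq J_X(M)$), and your stability argument for $m>n(x)$ and $m<n(x)$ yields a contradiction outright. So your method proves that $G\cap\cL(X)$ is a genuine suffix code, which is precisely what is used downstream (all relevant cylinders live in $X$). Incidentally, $G$ itself need not be a suffix code in $A^*$: in \cref{ex:Jclass} one checks $b\in G_{\alpha\beta}$ while $bb\in G_{\alpha}$, so your formulation is arguably the right one.

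One small caveat: your appeal to \cref{p:density-concentration} carries the hypotheses that $\mu$ is ergodic with support $X$, while the ambient setup for the lemma only fixes an invariant $\mu$ on an irreducible $X$. The paper's own proof has the same hidden dependence (via \cref{propositionThinSuff}), so this is not a divergence from the paper, but it is worth flagging.
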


\begin{proof}
    The fact that $G$ is a suffix code is a consequence of the fact that the monoid $M$ is stable. Moreover, $G$ is $X$-maximal since, for all $w\in\cL(X)$, there exists $u$ such that $\varphi(uw)\in J_X(M)$, by irreducibility and the fact that $R\cap\varphi(\cL(X))\neq\varepsilon$. Note that a word $w\in \cL(X)$ such that $\varphi(w)\in J_X(M)$ cannot be a  proper factor of any element of $G$, since $G$ is a suffix code. Thus there must exist $w\in\cL(X)$ such that $A^*wA^*\cap G = \varnothing$ (simply fix $u\in G$ and then take $w\in \cL(X)\cap A^+u$). It follows that $\mu(G)=1$ by \cref{propositionThinSuff}.
\end{proof}

Next we establish some properties of the weighted counting measure.

\begin{proposition}\label{p:weighted-counting}
    The weighted counting measure is an invariant probability measure on the skew product $(R\cup\{0\})\rtimes X$ which satisfies $\nu\circ\pi_X^{-1} = \mu$.
\end{proposition}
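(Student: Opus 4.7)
The plan is to verify three things---that $\nu$ extends to a well-defined Borel measure, that $\nu\circ\pi_X^{-1}=\mu$ (which also yields total mass $1$), and that $\nu$ is $T$-invariant---and to reduce all the nontrivial work to a single identity $(\ast)$ below, which is then proved dynamically. I would handle the projection formula first. On a two-sided cylinder,
\begin{displaymath}
\nu\circ\pi_X^{-1}([u\cdot v]_X)=\sum_{r\in R}\nu(\{r\}\times[u\cdot v]_X)=\frac{1}{d}\sum_{\substack{s\in R\\ s\varphi(u)\in R}}\mu(G_s\cdot uv).
\end{displaymath}
The key point is that the restriction $s\varphi(u)\in R$ may be dropped: if $s\in R$ but $s\varphi(u)\notin R$, then $s\varphi(u)\in K_X(M)\setminus J_X(M)$, so $Ms\varphi(u)M$ misses $\varphi(\cL(X))$ by the definition of $K_X(M)$, forcing $wuv\notin\cL(X)$ and $\mu(wuv)=0$ for every $w\in G_s$. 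Using that $G_s$ depends only on the $\HH$-class of $s$ and that each $\HH$-class has $d$ el
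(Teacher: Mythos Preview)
What you have so far is correct: if $s\in R$ and $s\varphi(u)\notin R$, stability of $M$ forces $s\varphi(u)\in K_X(M)\setminus J_X(M)$, whence $Ms\varphi(u)M\cap\varphi(\cL(X))=\varnothing$ by the second part of \cref{prop:Jclass}; since $\varphi(w)\in Ms$ for every $w\in G_s$, this gives $wuv\notin\cL(X)$ and $\mu(G_suv)=0$. After grouping by $\HH$-class you arrive at $\sum_H\mu(G_Huv)=\mu(Guv)$, which equals $\mu(uv)=\mu([u\cdot v]_X)$ by \cref{lemmaGr} and invariance of $\mu$, so the projection identity will go through.

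The paper organizes the proof differently. It establishes $T$-invariance first, splitting into the cases $u\neq\varepsilon$ (a straightforward reindexing) and $u=\varepsilon$ (where \cref{lemmaGr} enters via the disjoint decomposition $[G_rv\cdot\varepsilon]_X=\bigcup_{s\varphi(a)=r}[G_sav\cdot\varepsilon]_X$), and only then checks $\nu\circ\pi_X^{-1}=\mu$, doing so on right cylinders $[v]_X$. There the defining formula collapses to $\nu(\{r\}\times[v]_X)=\mu(G_rv)/d$, so your ``drop the restriction'' lemma is not needed. Your route---working with arbitrary two-sided cylinders---has the side benefit that the same argument verifies left-consistency of the pre-measure, which the paper leaves implicit in its appeal to Carath\'eodory; the paper's route is a little shorter for the projection identity alone. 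Since your text is truncated before $(\ast)$ and the invariance argument, I cannot compare those parts; just be aware that in the paper the boundary case $u=\varepsilon$ of invariance is not subsumed by the generic case and requires \cref{lemmaGr} separately, so if your $(\ast)$ is meant to cover everything at once, make sure it handles that boundary.
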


\begin{proof}
    Next we show that $\nu$ is invariant under the map $T$. Note that it suffices to check invariance for sets of the form $B = \{r\}\times [u\cdot v]_X$. First, assume that $u\neq\varepsilon$. If we let $u = u'a$, $a\in A$, then $T^{-1}(\{r\}\times[u\cdot v]_X)$ can be expressed as the disjoint union
    \[
        T^{-1}(\{r\}\times[u\cdot v]_X) = \bigcup_{s,s\varphi(a)=r} \{s\}\times[u'\cdot av]_X,
    \]
    and from the definition of $\nu$ we find
    \begin{align*}
        \nu(T^{-1}(\{r\}\times[u\cdot v]_X)) 
        &= \sum_{s,s\varphi(a)=r}\nu(\{s\}\times[u'\cdot av]_X) 
        = \sum_{s,s\varphi(a)=r}\frac1d\sum_{s',s'\varphi(u')=s}\mu(G_{s'}u'av) \\
        &= \frac1d\sum_{s,s\varphi(u)=r}\mu(G_{s}uv) = \nu(\{r\}\times[u\cdot v]_X).
    \end{align*}
    Next let us treat the case where $u=\varepsilon$, or in other words where $B = \{r\}\times[v]_X$. Observe that 
    \[
        T^{-1}(\{r\}\times[v]_X)=\bigcup_{s\varphi(a)=r}\{s\}\times [av]_X, \qquad [G_rv\cdot \varepsilon]_X = \bigcup_{s\varphi(a)=r}[G_sav\cdot \varepsilon]_X,
    \]
    where the unions are taken over pairs $(s,a)\in R\times A$ such that $s\varphi(a)=r$. Since $G=\bigcup_{s\in R}G_s$ is a suffix code by \cref{lemmaGr}, the second union in the above equation is disjoint (as is the first, for obvious reasons). Thus we have
    \begin{align*}
        \nu(T^{-1}(\{r\}\times [v]_X))&=\sum_{s\varphi(a)=r}\nu(\{s\}\times[av]_X)=\sum_{s\varphi(a)=r}\frac1d\mu(G_sav)\\
                                    &=\frac1d\mu(G_rv)=\nu(\{r\}\times[v]_X). 
    \end{align*}

    Finally, let us show that $\nu(R\times U) = \mu(U)$ for every Borel set $U$. Using \cref{lemmaGr} together with the invariance of $\mu$, we have
    \begin{equation*}
        \nu(R\times U) = \sum_{r\in R}\mu([G_r\cdot \varepsilon]_X\cap U)/d = \sum_{H\subseteq R}\mu([G_H\cdot \varepsilon]_X\cap U) = \mu([G\cdot\varepsilon]_X\cap U) = \mu(U),
    \end{equation*}
    where the last equality follows since $\mu([G\cdot\varepsilon]_X) = \mu(G) = 1$. Taking $U=X$, we find that $\nu(R\times X)=\mu(X)=1$, which shows that $\nu$ is indeed a probability measure.
\end{proof}

The following lemma will also be needed for the proof of \cref{theoremMain}. Its proof uses standard arguments from ergodic theory. We include it for the sake of completeness.

\begin{lemma}\label{l:existence-project}
    For every ergodic measure $\mu$ on $X$, there exists an ergodic measure $\bar\mu$ on $(R\cup\{0\})\rtimes X$ such that $\bar\mu(\{0\}\times X) = 0$ and $\bar\mu\circ\pi_X^{-1}=\mu$.
\end{lemma}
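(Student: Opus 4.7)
The plan is to extract $\bar\mu$ as an ergodic component of the weighted counting measure $\nu$ constructed in \cref{p:weighted-counting}, exploiting that $\nu$ already satisfies both $\nu(\{0\}\times X)=0$ and $\nu\circ\pi_X^{-1}=\mu$.

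First I would apply the ergodic decomposition \eqref{eqErgodicDecomposition} to $\nu$, obtaining a Borel probability measure $\tau$ on the space $\mathcal E$ of ergodic measures of $(R\cup\{0\})\rtimes X$ such that $\nu=\int_{\mathcal E}\bar\nu\, d\tau(\bar\nu)$. Since $\pi_X\circ T=S\circ\pi_X$, the pushforward $\bar\nu\circ\pi_X^{-1}$ of any ergodic $\bar\nu$ is an ergodic measure on $X$: an $S$-invariant Borel set in $X$ pulls back to a $T$-invariant Borel set of the same measure in the skew product, so any nontrivial invariant set downstairs would lift to one upstairs. Projecting the decomposition of $\nu$ along $\pi_X$ then yields
\[
    \mu=\nu\circ\pi_X^{-1}=\int_{\mathcal E}\bar\nu\circ\pi_X^{-1}\, d\tau(\bar\nu),
\]
which is itself an ergodic decomposition of $\mu$. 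Since $\mu$ is ergodic, the uniqueness of the ergodic decomposition (see \cite[Chapter 12]{book/Phelps2001}) forces $\bar\nu\circ\pi_X^{-1}=\mu$ for $\tau$-almost every $\bar\nu$.

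Similarly, $\nu(\{0\}\times X)=0$ together with the decomposition imply $\int\bar\nu(\{0\}\times X)\, d\tau(\bar\nu)=0$, and nonnegativity of the integrand forces $\bar\nu(\{0\}\times X)=0$ for $\tau$-almost every $\bar\nu$. The intersection of the two $\tau$-full subsets of $\mathcal E$ so obtained is itself $\tau$-full and hence nonempty; picking any element of this intersection produces the desired $\bar\mu$.

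The main point requiring care is the appeal to uniqueness of the ergodic decomposition of $\mu$: this rests on the fact that ergodic measures are precisely the extreme points of the weak-$*$ compact simplex $\mathcal M(X,S)$, together with Choquet's theorem, which ensures that an extreme point (here $\mu$ itself) has only the Dirac representing measure on the extreme points. The remaining ingredients — factor maps preserving ergodicity, and nonnegative integrands vanishing $\tau$-almost everywhere when their integral is zero — are routine.
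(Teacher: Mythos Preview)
Your proof is correct but takes a different route from the paper. The paper instead considers the set $\mathcal M_\mu$ of all invariant measures on the skew product that project to $\mu$ and give zero mass to $\{0\}\times X$; this set is nonempty (it contains the weighted counting measure $\nu$), closed, and convex, so by Krein--Milman it has an extreme point $\bar\mu$. One then checks directly that $\bar\mu$ is extreme in the full space $\mathcal M$ of invariant measures on the skew product: any convex decomposition $\bar\mu = s\zeta' + (1-s)\zeta''$ forces both $\zeta'$ and $\zeta''$ to give zero mass to $\{0\}\times X$, and extremality of $\mu$ downstairs forces $\zeta'\circ\pi_X^{-1}=\zeta''\circ\pi_X^{-1}=\mu$, so $\zeta',\zeta''\in\mathcal M_\mu$, contradicting extremality there. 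Your approach via the ergodic decomposition of $\nu$ and Choquet uniqueness is equally valid and arguably more natural from a dynamics perspective, but it invokes slightly heavier machinery (the Choquet simplex structure of $\mathcal M(X,S)$ and uniqueness of representing measures), whereas the paper's argument gets by with Krein--Milman alone.
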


\begin{proof}
    Let $\mathcal{M}$ be the set of all invariant probability measures on $(R\cup\{0\})\rtimes X$ and $\mathcal{M}_\mu$ the subset of those measures $\zeta$ such that $\zeta(\{0\}\times X)=0 $ and $\zeta\circ\pi_X^{-1} = \mu$. Note that $\mathcal{M}_\mu$ contains the weighted counting measure by \cref{p:weighted-counting} and thus $\mathcal{M}_\mu \neq\varnothing$. Moreover, $\mathcal{M}_\mu$ is a closed convex subspace of $\mathcal{M}$, so it must contain an extreme point $\bar\mu$ by the Krein--Milman theorem. Let us show that $\bar\mu$ is also an extreme point of $\mathcal{M}$. Suppose that $\bar\mu = s\zeta'+(1-s)\zeta''$ for some $\zeta',\zeta''\in\mathcal{M}$ and $0<s<1$. It is clear that both $\zeta'$ and $\zeta''$ must give  to $\{0\}\times X$ zero measure, and the fact that $\mu$ is an extreme point in the convex set of invariant measures on $X$ implies that $\zeta'\circ\pi_X^{-1}=\zeta''\circ\pi_X^{-1}=\mu$. Thus $\zeta',\zeta''\in\mathcal{M}_\mu$, contradicting the fact that $\bar\mu$ is an extreme point of $\mathcal{M}_\mu$. Since the extreme points of $\mathcal{M}$ are precisely the ergodic measure on $(R\cup\{0\})\rtimes X$, we are done. 
\end{proof}

The next proposition is a more precise form of \cref{theoremMain} for the case where $\mu$ is ergodic. Note that the statement uses \cref{l:existence-project} implicitly
for the existence of $\bar\mu$.
\begin{proposition}\label{propositionDensityMonoid} 
    Let $\mu$ be an ergodic measure on $A^\Z$ with support a shift space $X$. Let $\varphi\colon A^*\to M$ be a morphism onto a finite monoid and let $L=\varphi^{-1}(m)$ for some $m\in M$.  Fix an $\RR$-class $R$ of the $\JJ$-class $J_X(M)$. We set  the notation $U_{r,V}=\{r\}\times V$. Then $\delta_\mu(L)=0$ if $m\notin J_X(M)$ and otherwise
    \begin{equation}
        \delta_\mu(L)= \sum_{r,rm\in R}\bar\mu(U_{r,[L]_X})\bar\mu(U_{rm,X})
        \label{eqDensityRationalLanguage}
    \end{equation}
    where $\bar\mu$ is any ergodic measure on $(R\cup\{0\})\rtimes X$ such that $\bar\mu(\{0\}\times X)=0$ and $\bar\mu\circ\pi_X^{-1} = \mu$.
\end{proposition}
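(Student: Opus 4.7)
The plan is to handle the two cases separately. For $m \notin J_X(M)$ the claim $\delta_\mu(L) = 0$ is exactly \cref{c:density-0}. Assume $m \in J_X(M)$ and fix any ergodic $\bar\mu$ satisfying the hypotheses; the goal is to derive the product formula.

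The starting point is~\eqref{eqErgodic} applied to the skew product with $U = U_{r,[L]_X}$ and $V = U_{rm,X}$ for each $r \in R_m := \{r \in R : rm \in R\}$:
\[
\lim_n \frac{1}{n}\sum_{i=0}^{n-1}\bar\mu(U_{r,[L]_X}\cap T^{-i}U_{rm,X}) = \bar\mu(U_{r,[L]_X})\bar\mu(U_{rm,X}).
\]
Summing over $R_m$ and setting $E_i = \bigsqcup_{r\in R_m}(U_{r,[L]_X}\cap T^{-i}U_{rm,X})$, the Cesàro limit of $\bar\mu(E_i)$ equals the right-hand side of the claimed formula. It then suffices to show $\bar\mu(E_i) - \mu(L\cap A^i) \to 0$ as $i\to\infty$, which simultaneously yields the existence of $\delta_\mu(L)$ and its value.

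The crux is an algebraic lemma: for $r \in R_m$, the right stabilizers of $m$ and $rm$ in $M$ coincide, i.e., $\{n\in M : mn = m\} = \{n\in M : rmn = rm\}$. Since $m, rm \in J_X(M)$ and $rm \in Mm$, stability of the finite monoid gives $rm \LL m$, so there exists $v \in M$ with $m = v\cdot rm$; then $rmn = rm$ implies $mn = v\cdot rm\cdot n = v\cdot rm = m$ (the reverse inclusion is immediate). Using this, for $y \in [L]_X$ whose shortest $L$-prefix $d$ has length $k_0(y) \leq i$, factoring $y_0\cdots y_{i-1} = d\cdot y_{k_0(y)}\cdots y_{i-1}$ converts the condition $r\varphi(y_0\cdots y_{i-1}) = rm$ into $\varphi(y_0\cdots y_{i-1}) = m$, i.e., $y \in V_i := \{y : y_0\cdots y_{i-1}\in L\}$. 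Consequently $E_i \setminus (R_m \times V_i)$ is contained in $R_m \times ([L]_X \cap \{k_0 > i\})$, whose $\bar\mu$-measure is dominated by $\mu([L]_X \cap \{k_0 > i\}) \to 0$ (continuity of measure, using $[L]_X = \bigcup_k V_k$). Separately, invariance of $\bar\mu$ together with $\bar\mu(\{0\}\times X) = 0$ forces $\bar\mu(\{r\}\times V_i) = 0$ for $r \notin R_m$, so $\bar\mu(R_m \times V_i) = \mu(V_i) = \mu(L \cap A^i)$, completing the estimate.

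The main obstacle is the stabilizer identity; without it, the $\{k_0 \leq i\}$ portion of $E_i$ would contribute extra ``parasitic'' mass from words $w$ with $rm\cdot\varphi(w) = rm$ but $m\cdot\varphi(w) \neq m$, which does not a priori vanish in the Cesàro average. Once the identity is in hand, the rest is routine bookkeeping combining~\eqref{eqErgodic} and continuity of measure.
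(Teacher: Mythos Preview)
Your proof is correct and follows essentially the same approach as the paper: both arguments use ergodicity of $\bar\mu$ on the skew product applied to the sets $U_{r,[L]_X}$ and $U_{rm,X}$, the vanishing $\bar\mu(\{r\}\times V_i)=0$ for $r\notin R_m$ via invariance and $\bar\mu(\{0\}\times X)=0$, and the tail estimate $\mu([L]_X\cap\{k_0>i\})\to 0$ (the paper writes this as $\mu([C_{>i}]_X)<\epsilon$ where $C$ is the prefix code with $CA^*=LA^*$). Your stabilizer identity $\{n:mn=m\}=\{n:rmn=rm\}$ is exactly the paper's use of Green's lemma (the paper phrases it as injectivity of $x\mapsto rx$ on the $\RR$-class of $m$, after first arguing $m'\RR m$), and your direct ``$\bar\mu(E_i)-\mu(L\cap A^i)\to 0$'' packaging is a mild streamlining of the paper's $\epsilon$--$i_0$ sandwich, but the content is the same.
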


\begin{proof}
    If $m\notin J_X(M)$, then $\delta_\mu(L)=0$ by \cref{c:density-0}. We may now assume that $m\in J_X(M)$. Let $C$ be the prefix code such that $LA^*=CA^*$. For $i\ge 0$, let 
    \begin{equation*}
        C_{\le i}=\{u\in C\mid |u|\le i\},\quad C_{>i}=\{u\in C\mid |u|>i\}.
    \end{equation*}
    
    We claim that for every $r\in R$ such that $rm\in R$, one has
    \[
        U_{r,[L\cap A^i]_X}=U_{r,[C_{\le i}]_X}\cap T^{-i}(U_{rm,X}).
    \]
    The left-to-right inclusion is obvious. For the converse, we take $x$ such that
    $(r,x)\in U_{r,[C_{\leq i}]_X}\cap T^{-i}(U_{rm,X})$. This means that there exists
    some  $j$, with $0\leq j\leq i$, such that $\varphi(x_0\cdots x_{j-1}) = m$ and furthermore $rm'
    = rm$ where $m' = \varphi(x_0\cdots x_{i-1})$. Clearly $m \geq_{\RR} m'$, hence $m\RR
    m'$ by \cref{prop:Jclass} and stability of $M$. Moreover our choice of $r$
    guarantees that $rm\JJ m$, thus $rm\LL m$ by stability. By Green's lemma, $x\mapsto
    rx$ is a bijection between the $\RR$-classes of $m$ and $rm$. Since $rm=rm'$ it
    follows that $m=m'$, i.e. $\varphi(x_0\cdots x_{i-1})=m$. Thus $x\in[L\cap A^i]_X$ which proves the claim.

    As a result, we have
    \[
        \mu(L\cap A^i)=\bar\mu(R\times[L\cap A^i]_X)=\sum_{r,rm\in R}\bar\mu\Bigl(U_{r,[C_{\le i}]_X}\cap T^{-i}(U_{rm,X})\Bigr).
    \]
    Next we claim that for $\epsilon >0$, there is $i_0\ge 0$ such that $\bar\mu(U_{r,[C_{>i_0}]_X})<\epsilon$ for every $r\in R$. First observe that
    \begin{equation*}
        \sum_{r\in R}\bar\mu(U_{r,[C_{>i_0}]_X}) = \bar\mu(R\times [C_{>i_0}]_X) = \mu([C_{>i_0}]_X),
    \end{equation*}
    since $\bar\mu$ projects to $\mu$. Moreover since $C$ is a prefix code, the cylinders $[c]_X$ for $c\in C$ are all disjoint, and thus we may write
    \begin{equation*}
        \mu([C_{>i_0}]_X) = \mu(C_{>i_0}) = \sum_{i>i_0}\sum_{c\in C\cap A^{i}}\mu(c).
    \end{equation*}
    But note that this is a tail of the series
    \begin{equation*}
        \sum_{i\geq 0}\sum_{c\in C\cap A^{i}}\mu(c) = \mu(C) \leq 1,
    \end{equation*}
    and thus the claim immediately follows.

    Using the above claim together with the ergodicity of $\bar\mu$, this gives
    \begin{align*} 
        \delta_\mu(L)&=\lim_{n\to\infty}\frac{1}{n}\sum_{i=i_0}^{n-1}\mu(L\cap A^i)
        =\sum_{r,rm\in R}\lim_{n\to\infty}\frac{1}{n}\sum_{i=i_0}^{n-1}\bar\mu\bigl(U_{r,[C_{\le i}]_X}\cap T^{-i}(U_{rm,X})\bigr)\\
        &\ge\sum_{r, rm\in R}\lim_{n\to\infty}\frac{1}{n}\sum_{i=i_0}^{n-1}\bar\mu\bigl(U_{r,[C]_X}\cap T^{-i}(U_{rm,X})\bigr)-\epsilon\\
        &\ge\sum_{r, rm\in R}\bar\mu(U_{r,[L]_X})\bar\mu(U_{rm,X})-\epsilon.
    \end{align*}
    On the other hand, we have
    \begin{align*}
        \delta_\mu(L)&=\sum_{r, rm\in R}\lim_{n\to\infty}\frac{1}{n}\sum_{i=i_0}^{n-1}\bar\mu\bigl(U_{r,[C_{\le i}]_X}\cap T^{-i}(U_{rm,X})\bigr)\\
        &\le\sum_{r, rm\in R}\lim_{n\to\infty}\frac{1}{n}\sum_{i=i_0}^{n-1}\bar\mu\bigl(U_{r,[C]_X}\cap T^{-i}(U_{rm,X})\bigr)
        \le\sum_{r, rm\in R}\bar\mu(U_{r,[L]_X})\bar\mu(U_{rm,X}),
     \end{align*}
    concluding the proof.
\end{proof}

In order to conclude the proof of \cref{theoremMain}, we need to deduce the existence
of densities for a general invariant measure. This is done using the ergodic decomposition
discussed at the end of \cref{subsec:top-dynamics}. 
\begin{proof}[Proof of \cref{theoremMain}]
    Let $L$ be a rational language and $\mu$ be an invariant measure. Let $\mathcal E$ be the set
    of ergodic measure on the support of $\mu$ and consider the ergodic
    decomposition of $\mu$ as in Equation \eqref{eqErgodicDecomposition}. By
    \cref{propositionDensityMonoid}, $\delta_\nu(L)$ exists for every 
    $\nu\in\mathcal{E}$. Finally by the dominated convergence theorem 
    \begin{align*}
        \delta_\mu(L)&=\lim_{n\to\infty}\frac{1}{n}\sum_{i=0}^{n-1}\mu(L\cap A^i) 
        =\lim_{n\to\infty}\frac{1}{n}\sum_{i=0}^{n-1}\int_{\mathcal E}\nu(L\cap A^i) \, d\tau(\nu)\\
        &=\int_{\mathcal E}\lim_{n\to\infty}\frac{1}{n}\sum_{i=0}^{n-1}\nu(L\cap A^i) \, d\tau(\nu)
        =\int_{\mathcal E}\delta_\nu(L) \, d\tau(\nu).\qedhere
    \end{align*}
\end{proof}

When the weighted counting measure is ergodic, we can apply
\cref{propositionDensityMonoid} to obtain formulas based on the density of ideals, which are easily computable (\cref{sec:idealsfullmonoid}). This generalizes a result of \cite{BerstelPerrinReutenauer2009} where the same formula is proved for a Bernoulli measure. 
\begin{theorem}\label{theoremWeightedCountingMeasure}
 Let $\mu$ be an ergodic measure on $A^\Z$ with support a shift space $X$. Let $\varphi\colon A^*\to M$ be a morphism onto a finite monoid.  Fix $R$ an $\RR$-class of $J_X(M)$.  If the weighted counting measure is an ergodic measure on the skew product $(R\cup\{0\})\rtimes X$, then, for every $m\in J$, the density of $L=\varphi^{-1}(m)$ is
  \begin{equation}
   \delta_\mu(L)=\delta_\mu(A^*L)\delta_\mu(LA^*)/d,\label{eqDensity}
  \end{equation}
    where $d$ is the cardinality of the $\HH$-classes of $J_X(M)$.
\end{theorem}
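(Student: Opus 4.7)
The plan is to invoke \cref{propositionDensityMonoid} with $\bar\mu$ taken to be the weighted counting measure $\nu$, which is ergodic by hypothesis and satisfies $\nu\circ\pi_X^{-1}=\mu$ by \cref{p:weighted-counting}. For $m\in J=J_X(M)$ this yields
\[
    \delta_\mu(L)=\sum_{r\in R'}\nu(U_{r,[L]_X})\,\nu(U_{rm,X}),
\]
where $R'=\{r\in R:rm\in R\}$. The first step is to evaluate each factor from the definition of $\nu$. For $r\in R'$, since $rm\leq_{\LL}m$ and $rm\,\JJ\,m$, stability of the finite monoid $M$ forces $rm\,\LL\,m$; hence $M(rm)=Mm$ and so $G_{rm}=G_m$, giving $\nu(U_{rm,X})=\mu(G_m)/d$. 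On the other hand, writing $C=L\setminus LA^+$ for the minimal prefix code generating $LA^*$, one has $[L]_X=[C]_X$ with pairwise disjoint cylinders, whence $\nu(U_{r,[L]_X})=\mu(G_r C)/d$.

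Substituting and regrouping the sum over $r\in R'$ by the $\HH$-classes it contains (using that $G_r$ depends only on the $\HH$-class of $r$, and that each $\HH$-class has $d$ elements) gives
\[
    \delta_\mu(L)=\frac{\mu(G_m)}{d}\sum_{H}\mu(G_H C),
\]
where $H$ ranges over the $\HH$-classes of $J$ contained in $R'$. Since $\delta_\mu(LA^*)=\mu(C)$ and $\delta_\mu(A^*L)=\mu(G')$ (where $G'=L\setminus A^+L$) by \cref{propositionRightIdeal,propositionLeftIdeal}, the claimed formula is equivalent to the identity
\[
    \mu(G_m)\sum_{H}\mu(G_H C)=\mu(G')\mu(C).
\]

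The main obstacle is to establish this last identity. My proposed approach is direct: construct a ``first-entrance'' decomposition of the suffix code $G'$, in which each $w'\in G'$ is factored uniquely as $w'=u\cdot v$, with $v$ the shortest suffix of $w'$ belonging to $A^*G_m=\varphi^{-1}(Mm)$ (so that $v\in G_m$), and $u$ a word whose image under $\varphi$ lies in one of the $\HH$-classes of $R'$ (the correspondence being tracked via the location theorem for Green's relations inside $J$). Combined with the invariance of $\mu$ and the suffix-code disjointness of the various $G_H$'s, this should convert the decomposition into the measure identity above. An indirect alternative would be to first deduce from the ergodicity of $\nu$ that $\delta_\mu(\varphi^{-1}(n))$ is constant for $n$ ranging in the $\HH$-class $H_m$, yielding $\delta_\mu(\varphi^{-1}(H_m))=d\cdot\delta_\mu(L)$, and then identify $\delta_\mu(\varphi^{-1}(H_m))$ with $\delta_\mu(LA^*\cap A^*L)$ modulo sets of density zero (using \cref{p:density-concentration}), concluding via \cref{propositionQuasiIdeal}, which gives $\delta_\mu(LA^*\cap A^*L)=\mu(C)\mu(G')$.
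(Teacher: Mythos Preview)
Your computation up to $\delta_\mu(L)=\tfrac{\mu(G_m)}{d}\sum_{H}\mu([G_H\cdot C]_X)$ is exactly the paper's proof, but from there you miss two one-line observations that finish the argument. First, $G'=G_m$: since $A^*L=A^*\varphi^{-1}(m)=\varphi^{-1}(Mm)$, the minimal suffix code generating the left ideal $A^*L$ is by definition $G_m$, and it equals $(A^*L)\setminus A^+(A^*L)=L\setminus A^+L=G'$. Thus $\mu(G_m)=\delta_\mu(A^*L)$ immediately, and the target identity reduces to $\sum_{H}\mu([G_H\cdot C]_X)=\mu(C)$. Second, this is precisely the content of \cref{lemmaGr}: one may extend the sum from $R'$ to all $\HH$-classes of $R$ at no cost (if $rm\notin R$ then $rm\notin J$, so $\varphi^{-1}(MrmM)\cap\cL(X)=\varnothing$ by \cref{prop:Jclass}\ref{Jclass-intersect}, hence $[G_r\cdot C]_X=\varnothing$), and then with $G=\bigcup_{H\subseteq R}G_H$ one has $\sum_{H}\mu([G_H\cdot C]_X)=\mu([G\cdot\varepsilon]_X\cap[C]_X)=\mu([C]_X)=\mu(C)$ since $\mu(G)=1$.

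Your proposed ``first-entrance'' decomposition collapses for this reason: once $G'=G_m$, the shortest suffix of any $w'\in G'$ lying in $\varphi^{-1}(Mm)$ is $w'$ itself, forcing $u=\varepsilon$ and $\varphi(u)=1\notin R'$ (unless $J=M$), so the factorization carries no information. Your indirect alternative via $\varphi^{-1}(H_m)$ and \cref{propositionQuasiIdeal} does work---the constancy of $\delta_\mu(\varphi^{-1}(n))$ over $n\in H_m$ follows from the explicit form of $\nu$, since $[\varphi^{-1}(n)]_X=[C]_X$, $G_{rn}=G_m$, and $\{r:rn\in R\}=\{r:rm\in R\}$ for all $n\HH m$---but it is more circuitous than simply invoking \cref{lemmaGr}.
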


\begin{proof}
    Let $\nu$ be the weighted counting measure on $(R\cup\{0\})\rtimes X$. Let $C$ be the prefix code such that $LA^*=CA^*$. Then Equation \eqref{eqDensityRationalLanguage} reduces to
    \begin{align*}
      \delta_\mu(L)&=\sum_{r, rm\in R}\nu(U_{r,[L]_X})\nu(U_{rm,X})
            =\frac{1}{d^2}\sum_{r, rm\in R}\mu([G_r \cdot C]_X)\mu(G_{rm})\\
            &=\frac{1}{d^2}\mu(G_m)\sum_{r, rm\in R}\mu([G_r \cdot C]_X)
            =\frac{1}{d}\mu(G_m)\sum_{H\subseteq R}\mu([G_H  \cdot C]_X)
    \end{align*}
    where $H$ runs over the $\HH$-classes of $R$ and $G_H$ is the common value of $G_r$ for the $d$ elements $r\in H$. By \cref{lemmaGr} and since $C$ is a prefix code,
    \[
        \sum_{H\subseteq R}\mu([G_H \cdot C]_X)=\mu([C]_X) = \mu(C)
    \]
    and therefore by Propositions~\ref{propositionRightIdeal} and \ref{propositionLeftIdeal}
    \[
        \delta_\mu(L)=\frac{1}{d}\mu(G_{m})\mu(C)=\frac{1}{d}\delta_\mu(A^*L)\delta_\mu(LA^*).\qedhere
    \]
\end{proof}

There are known examples where the skew product has more than one ergodic measure. In \cref{exampleKarl}, the skew product of $X$ with $\Z/2\Z$ has two orbits, each one being the support of an ergodic measure, and the weighted counting measure is non-ergodic.

\section{On algebraic properties  of the density} 
\label{sec:algebraic}
Let us highlight a corollary which generalizes the fact that, when $\mu$ is a Bernoulli measure with rational values, the density of a rational language is rational~\cite{Berstel1972}. 
\begin{corollary}\label{corollaryExtensionField}
    Let $\K$ be an extension of $\Q$ such that $\mu(L)\in \K\cup\infty$ for
    every rational language $L$. Then, for every rational language $L$ on the alphabet $A$
    which satisfies the hypotheses of \cref{theoremWeightedCountingMeasure}, the
    density of $L$ belongs to $\K$.
\end{corollary}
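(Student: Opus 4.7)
The plan is to apply \cref{theoremWeightedCountingMeasure} together with the explicit formulas for the densities of one-sided ideals from \cref{propositionRightIdeal,propositionLeftIdeal}, reducing $\delta_\mu(L)$ to an arithmetic combination of $\mu$-values of rational languages, all of which lie in $\K$ by assumption.

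First, I would use the morphism $\varphi\colon A^*\to M$ supplied by the hypotheses of \cref{theoremWeightedCountingMeasure} to decompose $L$ as the finite disjoint union $L=\bigsqcup_{m\in \varphi(L)}\varphi^{-1}(m)$. Each $\varphi^{-1}(m)$ is rational, and by finite additivity of the density (together with existence granted by \cref{theoremMain}), it suffices to prove $\delta_\mu(\varphi^{-1}(m))\in\K$ for every $m\in\varphi(L)$. For $m\notin J_X(M)$, \cref{c:density-0} gives $\delta_\mu(\varphi^{-1}(m))=0\in\K$. For $m\in J_X(M)$, \cref{theoremWeightedCountingMeasure} yields
\[
    \delta_\mu(\varphi^{-1}(m))=\delta_\mu(A^*\varphi^{-1}(m))\,\delta_\mu(\varphi^{-1}(m)A^*)/d.
\]

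Second, I would apply \cref{propositionRightIdeal} to the rational right ideal $\varphi^{-1}(m)A^*$ and \cref{propositionLeftIdeal} to the rational left ideal $A^*\varphi^{-1}(m)$ in order to rewrite the two factors on the right as $\mu(C_m)$ and $\mu(G_m)$, where
\[
    C_m=\varphi^{-1}(m)A^*\setminus\varphi^{-1}(m)A^+,\qquad G_m=A^*\varphi^{-1}(m)\setminus A^+\varphi^{-1}(m).
\]
Both $C_m$ and $G_m$ are rational, as differences and concatenations of rational languages, and as prefix, respectively suffix codes they satisfy $\mu(C_m),\mu(G_m)\le 1$. Hence these quantities are finite and by the standing assumption on $\mu$ they lie in $\K$. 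Since $\Q\subseteq\K$ we also have $1/d\in\K$, and summing the resulting expressions over $m\in\varphi(L)$ gives $\delta_\mu(L)\in\K$.

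There is essentially no obstacle in this argument: it is a direct chain of invocations of the preceding results. The only ``hidden'' point is the rationality of the auxiliary codes $C_m$ and $G_m$, which is immediate from the closure of rational languages under concatenation and set difference, and the fact that they have finite $\mu$-mass—which is what allows us to invoke the hypothesis $\mu(\text{rational})\in\K\cup\{\infty\}$ and land in $\K$ rather than at $\infty$.
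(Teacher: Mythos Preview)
Your argument is correct and is exactly the intended one: the paper states the corollary without proof, since it is immediate from \cref{theoremWeightedCountingMeasure} together with \cref{propositionRightIdeal,propositionLeftIdeal}, precisely as you spell out. The only additional care you take---checking rationality of $C_m,G_m$ and that their $\mu$-mass is finite so the hypothesis yields values in $\K$ rather than $\infty$---is appropriate and fills in what the paper leaves implicit.
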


The hypothesis that $\mu(L)\in \K\cup\infty$ is satisfied when $\mu$ is a Markov measure (or, more generally, a sofic measure) defined by a transition matrix and an initial vector with coefficents in $\K$. 
More details can be found in \cref{sec:markov}. 
It is also satisfied when all the values of $\mu$ on $A^*$ are in $\K$ and the support $X$
of $\mu$ is minimal. Indeed, in this case, consider a rational language $L$. If every word
of $\cL(X)$ is a factor of some word of $L$, then $\delta_\mu(L)>0$ by Equation \eqref{eqDensity},
which implies $\mu(L)=\infty$. Otherwise, the intersection $L\cap \cL(X)$ is
finite and thus the conclusion follows. Let us finish with an example.

\begin{example}\label{ex:parity}
    Let $X$ be the Fibonacci shift (\cref{exampleFibonacci}) and $\A$ be the automaton
    depicted in \cref{figureJClass}. Let $\mu$ be the unique ergodic measure on $X$.
    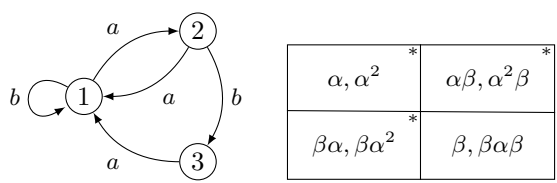
\begin{figure}
      \centering
        \begin{tikzpicture}
            
            \node[state](1)at(180:1){$1$};
            \node[state](2)at(60:1){$2$};
            \node[state](3)at(300:1){$3$};

            \draw[->,label,loop left,left](1) edge node{$b$}(1);
            \draw[->,label,bend left](1)edge node{$a$}(2);
            \draw[->,label,bend left](2)edge node{$a$}(1);
            \draw[->,label,bend left](2)edge node{$b$}(3);
            \draw[->,label,bend left](3)edge node{$a$}(1);
        \end{tikzpicture}
        \quad
        \begin{tikzpicture}
            \matrix (j)  [dclass,minimum width=50pt,minimum height=25pt]{
                $\alpha,\alpha^2$ & $\alpha\beta,\alpha^2\beta$ \\ 
                $\beta\alpha,\beta\alpha^2$ & $\beta,\beta\alpha\beta$ \\
            };
            \drawgrid{2}{2}{j}
            \node[idempotent] at (j-1-1.north east) {*};
            \node[idempotent] at (j-1-2.north east) {*};
            \node[idempotent] at (j-2-1.north east) {*};
        \end{tikzpicture}
        \caption{The automaton $\A$ in \cref{ex:parity} next to the $\JJ$-class of $J_X(M)$, where $M$ is the transition monoid of $\A$. The image of $a$ in $M$ is denoted by $\alpha$ while the image of $b$ is denoted by $\beta$.}\label{figureJClass}
    \end{figure}
    Let $\varphi\from A^*\to M$ be the transition morphism of $\A$ and
    $\alpha=\varphi(a)$, $\beta=\varphi(b)$. Let $R$ be the $\RR$-class of $\alpha$. Note that $\alpha^2$ is an idempotent which belongs to the $\JJ$-class $J_X(M)$. 

    Let $\nu$ be the weighted counting measure on $(R\cup\{0\})\rtimes X$. We claim that $\nu$ is an ergodic measure. First observe that $(R\cup\{0\})\rtimes X$ has two closed stable subsets
    \[
        Y_0 = (\{\alpha\beta,\alpha^2\beta\}\times [b]_X) \cup (\{0\}\times X),\qquad Y_1 = (\{\alpha\beta,\alpha^2\beta\}\times [a]_X) \cup (\{\alpha,\alpha^2\}\times X).
    \]
    Note that $\nu(Y_0)=0$ and  $\nu(Y_1)=1$ (in fact $Y_1$ is the support of $\nu$). Consider the morphism $\psi\from A^*\to\Z/2\Z$ defined by $\psi(a)=1$ and $\psi(b)=0$ and the corresponding skew product $(\Z/2\Z\times X, \tilde T)$. The weighted counting measure $\tilde\nu$ on this skew product is ergodic by \cite[Corollary 8.12]{BertheGouletOuelletNybergBroddaPerrinPetersen2024}. Furthermore, one can show
    that
    \[
        \pi(g,x)=
        \begin{cases}
            (\alpha^2,x) & \mbox{if $g=0$ and $x_{-1}=a$}\\
            (\alpha,x) & \mbox{if $g=1$ and $x_{-1}=a$}\\
            ((\alpha\beta)^2,x) & \mbox{if $g=0$ and $x_{-1}=b$}\\
            (\alpha\beta,x) & \mbox{if $g=1$ and $x_{-1}=b$}
        \end{cases}
    \]
    is a continuous map $\Z/2\Z\times X\to Y_1$ such that $\pi\circ \tilde T = T\circ\pi$ and
    $\tilde\nu\circ\pi^{-1}=\nu$. Therefore the weighted counting measure on $(R\cup\{0\})\rtimes X$
    is ergodic, and   the hypotheses of \cref{theoremWeightedCountingMeasure} are satisfied.

    The values $\mu(w)$ for $w\in A^*$ are all in the quadratic extension $\K=\Q(\lambda)$ where $\lambda$ is the golden ratio \cite[Theorem 2]{Berthe1996}. By \cref{corollaryExtensionField} it follows that $\delta_\mu(\varphi^{-1}(m))\in \K$ for every $m\in J_X(M)$. For instance the language $L=\varphi^{-1}(\alpha)$ satisfies $\mu(LA^*)=\mu(A^*L)=\mu(a)$. Since $\mu(a) = \lambda$ (cf.\ \cite[Example~3.8.19]{DurandPerrin2021}), we get that $\delta_\mu(L)=\frac{1}{2\lambda^2}$ from \cref{theoremWeightedCountingMeasure} (noting that the $\HH$-classes of $J_X(M)$ have size 2).
    Notice also that the language $\psi^{-1}(0)$ has the same intersection with $\cL(X)$ as the submonoid $C^*$ generated by $C=\{aa,aba,b\}$. The language $C^*$ is recognized by $\A$ with $1$ as initial and terminal state. Since $\Z/2\Z\rtimes X$ is uniquely ergodic, we have $\delta_\mu(\psi^{-1}(0))=1/2$.  
\end{example}

\section{Markov and sofic measures}
\label{sec:markov}

This section discusses how our results may be applied to the case of Markov, and more generally sofic measures. The main result, \cref{propositionBerstel}, shows that sofic measures satisfy the condition on field extensions from \cref{corollaryExtensionField}.

First, recall that a \emph{Markov measure} $\mu$ (also called a \emph{Markov chain}) is given by an $A\times A$-stochastic matrix $M$ (its transition matrix) and a stochastic $A$-vector $v$ (its initial vector) which is a left eigenvector of $M$ for the eigenvalue $1$. Then, for $w=a_0a_1\cdots a_{n-1}$, with $a_i\in A$, we define
\[
    \mu(w)=v_{a_0}M_{a_0a_1}\cdots M_{a_{n-2},a_{n-1}}
\]
The measure is invariant because $vM=v$. It is ergodic if the matrix $M$ is irreducible and it is mixing if $M$ is primitive (see \cite{book/Petersen1983}). A shift space $X$ has \emph{finite type} if $\cL(X) = A^*\setminus A^*FA^*$ for some finite set $F\subseteq A^*$, called the set of forbidden blocks. The support of a Markov measure $\mu$ is the shift of finite type defined by the set of forbidden blocks $ab$ where $a,b\in A$ are such that $M_{a,b}=0$. In terms of probability theory, a Markov measure is defined by a sequence of random variables $\zeta_n$, as for a Bernoulli measure, but this time $\zeta_n$ depends on $\zeta_{n-1}$. When $\zeta_n$ depends on $\zeta_{n-k},\ldots,\zeta_{n-1}$ we say that $\mu$ is a Markov measure of \emph{order} $k$.

A \emph{sofic measure} $\nu$ (also called a \emph{hidden Markov chain}) is given by a shift of finite type $X$ on the alphabet $B$, a Markov measure $\mu$ on $X$, and a map $\phi\colon B\to A$ from $B$ onto an alphabet $A$. We extend $\phi$ to a morphism from $B^*$ to $A^*$. Its extension to a map from $A^\Z$ to $B^\Z$ is called a $1$-block map. Then, for $w\in A^*$, we define
\[
    \nu(w)=\mu(\phi^{-1}(w)).
\]
The support of $\nu$ is contained in the sofic shift $Y=\phi(X)$ (a shift $X$ is called \emph{sofic} if $\cL(X)$ is rational). A sofic measure is invariant. It is ergodic if $\mu$ is ergodic (see \cite{BoylePetersen2011}).
\begin{example}\label{exampleBoylePetersen}
    Consider the Markov measure on $B=\{1,2,3\}$ defined by the pair 
    \[
        v=\begin{bmatrix}1/3&1/3&1/3\end{bmatrix},\quad
        M=\begin{bmatrix}0&2/3&1/3\\
        2/3& 1/3&0\\
        1/3&0&2/3\\
        \end{bmatrix}
    \]
    The support of $\mu$ is the shift of finite type $X$ represented in \cref{figureExample} on the left.  Using the $1$-block map $\phi(1)=a$ and $\phi(2)=\phi(3)=b$, we obtain an invariant sofic measure defined by the diagram of \cref{figureExample} on the right.
    \begin{figure}
        \centering
            \begin{tikzpicture}
                \node[state](1)at(-5,0){$1$};\node[state](2)at(-2.5,1.5){$2$};
                \node[state](3)at(-2.5,-1.5){$3$};

                \draw[->,above,bend left=20,label](1)edge node{$\frac{2}{3}$}(2);
                \draw[->,loop right,right,label](2)edge node{$\frac{1}{3}$}(2);
                \draw[->,above,bend left=20,label,near start,](2)edge node{$\frac{2}{3}$}(1);
                \draw[->,above,bend left=20,near end,label](1)edge node{$\frac{1}{3}$}(3);
                \draw[->,loop right,right,label](3)edge node{$\frac{2}{3}$}(3);
                \draw[->,above,bend left=20,label](3)edge node{$\frac{1}{3}$}(1);
                
                \node[state](1)at(0,0){$1$};
                \node[state](2)at(2.5,1.5){$2$};
                \node[state](3)at(2.5,-1.5){$3$};

                \draw[->,above,bend left=20,label](1)edge node{$b|\frac{2}{3}$}(2);
                \draw[->,loop right,right,label](2)edge node{$b|\frac{1}{3}$}(2);
                \draw[->,above,bend left=20,near start,label](2)edge node{$a|\frac{2}{3}$}(1);
                \draw[->,above,bend left=20,near end,label](1)edge node{$b|\frac{1}{3}$}(3);
                \draw[->,loop right,right,label](3)edge node{$b|\frac{2}{3}$}(3);
                \draw[->,above,bend left=20,label](3)edge node{$a|\frac{1}{3}$}(1);
        \end{tikzpicture}
        \caption{A sofic measure.}\label{figureExample}
    \end{figure}
    It can be shown that this sofic measure is not a Markov measure of order $k$ for any $k$ (see~\cite{BoylePetersen2011}).
\end{example}

A map $\mu\colon A^*\to \R_+$ is \emph{$\R_+$-rational} if there is a morphism $\varphi\colon A^*\to M_n(\R_+)$
from $A^*$ into the monoid of $n\times n$-matrices with coefficients in $\R_+$, a row vector $\lambda$
and a column vector $\gamma$ such that
\[
    \mu(w)=\lambda\mu(w)\gamma,
\]
for every $w\in A^*$. The triple $(\lambda,\varphi,\gamma)$ is called a \emph{linear representation} of $\mu$.
The following is from \cite{Hansel1989} (see also the survey \cite{BoylePetersen2011})

\begin{proposition}\label{propositionHP}
    A probability measure on $A^\Z$ is sofic if and only if the associated probability distribution is $\R_+$-rational.
\end{proposition}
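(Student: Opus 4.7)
The plan is to prove the two implications separately. The forward direction (sofic $\Rightarrow$ $\R_+$-rational) is a direct construction, while the converse requires a normalization argument together with a reconstruction of a hidden Markov chain.

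For the forward direction, suppose $\nu = \mu\circ\phi^{-1}$ where $\mu$ is a Markov measure on $B^\Z$ with transition matrix $M$ and initial vector $v$, and $\phi\from B\to A$ is a $1$-block map. I would build a linear representation $(\lambda,\varphi,\gamma)$ of dimension $|B|$ by setting $\lambda=v$, letting $\gamma$ be the all-ones column vector, and defining, for each $a\in A$, the matrix $\varphi(a)$ with entries $\varphi(a)_{b,b'} = M_{b,b'}\,\mathbb{1}[\phi(b)=a]$. A straightforward induction on $|w|$ shows that $(\lambda\varphi(a_0)\cdots\varphi(a_{n-1})\gamma)$ equals the sum of $v_{b_0}M_{b_0,b_1}\cdots M_{b_{n-2},b_{n-1}}$ over all preimages $b_0\cdots b_{n-1}\in\phi^{-1}(w)$, which is precisely $\nu(w)$. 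A quick check ($a=a_0$: $\lambda\varphi(a_0)\gamma = \sum_{b:\phi(b)=a_0}v_b$) already captures the pattern, and the general case follows because $M$ is stochastic, so telescoping the factor $\gamma=\mathbf{1}$ through the product eliminates spurious coordinates.

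For the converse, suppose $\mu$ admits a linear representation $(\lambda,\varphi,\gamma)$ with entries in $\R_+$. The probability condition $\sum_{a\in A}\mu(wa)=\mu(w)$ forces $\lambda\varphi(w)P\gamma=\lambda\varphi(w)\gamma$ for all $w$, where $P=\sum_a\varphi(a)$. By passing to a minimal (or reduced) $\R_+$-representation and conjugating by a positive diagonal matrix, I would arrive at a representation in which $P\gamma=\gamma$ and each row of $P$ sums to $1$; each $\varphi(a)$ then decomposes $P$ into a sum of nonnegative matrices with matching row-sum conditions, so the entries admit a probabilistic interpretation. With this normalization in hand, I would then construct a Markov chain on the alphabet $B=\{(a,i)\in A\times\{1,\dots,n\}\}$ (restricted to indices where the relevant entries are nonzero), with transition probability from $(a,i)$ to $(a',j)$ equal to $\varphi(a')_{i,j}$, initial distribution derived from $\lambda$, and $1$-block map $\phi(a,i)=a$. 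The support of this chain is automatically a shift of finite type (determined by the support of the transition matrix), and a direct computation recovers $\mu(w)$ as the $\phi^{-1}$-sum of Markov weights.

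The main obstacle is the normalization step in the converse direction. While it is formally analogous to conjugating a representation of a rational language to put it into a "trim" form, in the $\R_+$ setting one must also ensure that the conjugation preserves nonnegativity; this requires first eliminating indices $i$ with $\gamma_i=0$, then picking a strictly positive fixed vector on the remaining coordinates, which is justified by the probability condition and the existence of an invariant measure for the substochastic matrix $P$. The delicate point is that the dimension of the resulting Markov representation may exceed the dimension of the original $\R_+$-representation (a well-known phenomenon in the theory of nonnegative rank), but this does not affect the conclusion since we only need \emph{some} sofic presentation.
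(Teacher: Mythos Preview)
The paper does not actually prove this proposition: it is stated with a citation to \cite{Hansel1989} and the survey \cite{BoylePetersen2011}, and no argument is given. So there is nothing in the paper to compare your proof against; the assessment below is on your argument's own merits.

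Your forward direction is correct and is the standard construction: restricting the rows of the Markov matrix according to the $1$-block map and reading off the representation with $\lambda=v$, $\gamma=\mathbf{1}$ works exactly as you describe.

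The converse has the right shape, but the normalization step---which you rightly identify as the crux---is not actually carried out, and the sketch you give would not succeed as written. The phrase ``passing to a minimal (or reduced) $\R_+$-representation'' is the problem: minimal nonnegative representations are not canonically available (this is tied to nonnegative rank, as you note later, but in the wrong place), and there is no trimming procedure that simultaneously preserves nonnegativity and forces the span of $\{\lambda\varphi(w)\}$ to be the whole space, which is what you would need to deduce $P\gamma=\gamma$ directly from $\lambda\varphi(w)P\gamma=\lambda\varphi(w)\gamma$. The usual route is instead to \emph{replace} $\gamma$: after deleting states unreachable from $\lambda$, the sequence $(P^k\gamma)_k$ is bounded (since $(\lambda\varphi(w))_i(P^k\gamma)_i\le\lambda\varphi(w)P^k\gamma=\mu(w)$), so a Ces\`aro limit $\tilde\gamma\ge 0$ exists with $P\tilde\gamma=\tilde\gamma$ and still $\lambda\varphi(w)\tilde\gamma=\mu(w)$; one then restricts to $\{i:\tilde\gamma_i>0\}$ (an absorbing set for $P$, so the restriction is harmless) and conjugates by $\operatorname{diag}(\tilde\gamma)$. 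A symmetric argument on the left, using the invariance identity $\sum_a\mu(aw)=\mu(w)$, is needed to obtain $\lambda P=\lambda$; you omit this, but without it the initial vector you propose on $B=A\times\{1,\dots,n\}$ is not stationary for the transition matrix, and the paper's definition of Markov measure requires stationarity. Finally, your closing remark conflates two things: the blow-up to $|A|\cdot n$ states in your construction is just the standard edge-shift recoding, not an instance of the nonnegative-rank phenomenon.
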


The probability measure will be invariant if and only if the linear representation $(\lambda,\varphi,\gamma)$ can be chosen such that the matrix $P=\sum_{a\in A}\varphi(a)$ is stochastic, with $\lambda P=\lambda$ and $\gamma=\begin{bmatrix}1&1&\ldots&1\end{bmatrix}^t$.
\begin{example}
    A linear representation for the sofic measure of \cref{exampleBoylePetersen} is given by 
    \[
        \varphi(a)=\begin{bmatrix}0&0&0\\2/3&0&0\\1/3&0&0\end{bmatrix},\quad
        \varphi(b)=\begin{bmatrix}0&2/3&1/3\\0&1/3&0\\0&0&2/3\end{bmatrix},
    \]
    with 
    \[
        \lambda=\begin{bmatrix}1/3&1/3&1/3\end{bmatrix},\quad \gamma=\begin{bmatrix}1\\ 1\\ 1\end{bmatrix}.
    \]
\end{example}

Let $L\subseteq A^*$ be a language and $\mu$ be a probability measure on $A^\Z$. The \emph{generating series} of $L$ is the formal series
\[
    f_L(z)=\sum_{n\ge 0}\mu(L\cap A^n)z^n.
\]
Therefore, the density of $L$ (if it exists) is the limit in average of the coefficients of $f_L(z)$.

A formal series $f(z)=\sum_{n\ge 0}f_nz^n$, with $f_n\ge 0$, is $\R_+$-rational if the map $a^n\mapsto f_n$ is $\R_+$-rational. In this case, we have $f(z)=p(z)/q(z)$ for two polynomials $p,q$ with real coefficients. The following statement, originally due to Berstel, is well known (see \cite{Eilenberg1974} for example).
\begin{proposition}\label{propositionBerstel}
    Let $L$ be a rational language. If $\mu$ is a sofic measure, then $f_L(z)$ is $\R_+$-rational. Let $\K$ be a subfield of $\R$ containing the values of $\mu$ on $A^*$. Then $\mu(L)\in \K\cup\{\infty\}$ and $\delta_\mu(L)\in \K$.
\end{proposition}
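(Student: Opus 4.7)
The plan is to combine two $\R_+$-rational linear representations through a tensor product construction. The characteristic series of the rational language $L$ admits an $\N$-rational, hence $\R_+$-rational, linear representation $(\alpha,\psi,\beta)$ obtained from a deterministic finite automaton recognizing $L$: here $\psi\from A^*\to M_m(\{0,1\})$ is the transition morphism and $\alpha,\beta$ are characteristic vectors of the initial and terminal states. By \cref{propositionHP}, the sofic measure $\mu$ admits an $\R_+$-rational representation $(\lambda,\varphi,\gamma)$ with $\mu(w)=\lambda\varphi(w)\gamma$. The first step is to set $\Phi(a)=\psi(a)\otimes\varphi(a)$ for $a\in A$, and verify via the identity $(A\otimes B)(C\otimes D)=(AC)\otimes(BD)$ that this extends to a monoid morphism $\Phi\from A^*\to M_{mn}(\R_+)$. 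A direct computation yields $(\alpha\otimes\lambda)\Phi(w)(\beta\otimes\gamma)=(\alpha\psi(w)\beta)(\lambda\varphi(w)\gamma)$, which equals $\mu(w)$ when $w\in L$ and is zero otherwise. Summing over words of length $n$, and letting $P=\sum_{a\in A}\Phi(a)$, we get $\mu(L\cap A^n)=(\alpha\otimes\lambda)P^n(\beta\otimes\gamma)$, and therefore
\[
    f_L(z)=(\alpha\otimes\lambda)(I-zP)^{-1}(\beta\otimes\gamma),
\]
which is an $\R_+$-rational series of the form $p(z)/q(z)$.

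The delicate point for the second statement is that the representation built above may well have entries outside of $\K$, so one cannot simply restrict. Instead, I would invoke the Hankel matrix characterization of rationality: a formal power series is rational over a field if and only if its Hankel matrix has finite rank, and this rank is the same over any field containing the coefficients. Since each coefficient $\mu(L\cap A^n)$ is a finite sum of values $\mu(w)\in\K$, every entry of the Hankel matrix of $f_L$ lies in $\K$, so $f_L$ admits a linear representation over $\K$. From this one obtains $f_L(z)=p(z)/q(z)$ with $p,q\in\K[z]$ and $q(0)=1$.

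The value $\mu(L)$ is simply $f_L(1)$: when $q(1)\ne 0$ it equals $p(1)/q(1)\in\K$, and otherwise diverges to $\infty$. For the density, \cref{theoremMain} guarantees that $\delta_\mu(L)$ exists. Since the coefficients $f_n=\mu(L\cap A^n)$ lie in $[0,1]$, a partial fraction decomposition of $p(z)/q(z)$ forces all poles of $f_L$ to have modulus at least $1$. A pole at $z=1$ (if any) must be simple, for otherwise the coefficients would grow; and poles at other roots of unity contribute only oscillating terms whose Cesàro averages vanish. Thus $\delta_\mu(L)=\lim_{z\to 1^-}(1-z)f_L(z)$, which is $0$ when $q(1)\ne 0$, and equals $p(1)/r(1)$ when $q(z)=(1-z)r(z)$ with $r(1)\ne 0$. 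In both cases $\delta_\mu(L)\in\K$.

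The main obstacle I expect is the second paragraph: passing from the $\R_+$-representation of $f_L$ to one with entries in $\K$. The $\R_+$-rational and $\K$-rational conditions are controlled by different constraints, and a direct restriction of the tensor product construction is not available. The Hankel matrix argument circumvents this by working intrinsically with the coefficients of the series rather than with any particular representation. The final Cesàro analysis is standard but requires some care to handle possible poles at roots of unity other than $1$.
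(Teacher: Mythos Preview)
Your argument is essentially correct and follows the same overall strategy as the paper: establish $\R_+$-rationality of $f_L$ via a Hadamard-type product, then transfer to a representation over $\K$ using that all coefficients $h_n=\mu(L\cap A^n)$ lie in $\K$, and finally read off $\mu(L)$ and $\delta_\mu(L)$ from the rational expression $p(z)/q(z)$. Your explicit tensor product is exactly the construction underlying the Hadamard product theorem the paper cites (\cite[Theorem~5.2]{Eilenberg1974}), and your Hankel argument is the content of the reference \cite[Proposition~3.2]{Eilenberg1974} the paper invokes.

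The one genuine divergence is in how the density is handled. The paper appeals to \cite[Theorem~7.2]{Eilenberg1974}, which states that for an $\R_+$-rational series with bounded coefficients there is a period $p$ such that each subsequence $(h_{np+i})_n$ converges to some $r_i$, and $r_i\in\K$ when the $h_n$ are; then $\delta_\mu(L)=\tfrac1p\sum_i r_i$. This makes \cref{propositionBerstel} entirely self-contained and in particular \emph{independent} of \cref{theoremMain}. Your route instead imports the existence of $\delta_\mu(L)$ from \cref{theoremMain} and then computes it as the Abel limit $\lim_{z\to 1^-}(1-z)f_L(z)$. This is logically valid in the paper's context, but it reverses the dependency: \cref{propositionBerstel} now rests on the main theorem rather than providing a parallel, classical route to the same conclusion for sofic measures.

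Two small points. First, once you invoke \cref{theoremMain}, the pole analysis is redundant: Ces\`aro convergence to $\ell$ already implies $(1-z)\sum f_nz^n\to\ell$ by a standard Abelian theorem, and the value is then read off directly from $p(z)/q(z)$. Second, your phrase ``poles at other roots of unity'' is imprecise: a priori $q$ could vanish at unimodular points that are not roots of unity. This is harmless given the previous remark, but if you wanted a self-contained pole argument you would need either the Berstel--Soittola structure theorem for $\R_+$-rational series or the observation that boundedness of $(f_n)$ forces every unimodular pole to be simple, after which \emph{any} simple unimodular pole $e^{i\theta}$ with $\theta\neq 0$ contributes a term with vanishing Ces\`aro mean. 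For $\mu(L)$, your dichotomy ``$q(1)\neq 0$ gives $p(1)/q(1)$, else $\infty$'' is correct but relies implicitly on Pringsheim's theorem (nonnegative coefficients force $z=1$ to be a singularity when the radius of convergence is $1$); the paper's proof leaves this equally implicit.
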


\begin{proof}
    Let $g\colon A^*\to \R_+$ be defined by 
    \[
        g(w)=\begin{cases}
            \mu(w) & \mbox{if $w\in L$,}\\
            0 & \mbox{otherwise.}
        \end{cases}
    \]
    Thus, $g(w)$ is the product of the values of $\mu$ and of the characteristic function $\chi_L$ of $L$.  The map $w\mapsto\mu(w)$ is $\R_+$-rational by \cref{propositionHP} and $\chi_L$ is $\R_+$-rational since $L$ is rational. Therefore, by \cite[Theorem 5.2]{Eilenberg1974}, the map $g$ is $\R_+$-rational. Set $h_n=\sum_{w\in A^n}g(w)$. Then $h=\sum_{n\ge 0}h_nz^n$ is $\R_+$-rational. Since $h_n=\mu(L\cap A^n)$, this proves that $f_L(z)$ is $\R_+$-rational.

    By \cite[Theorem 7.2]{Eilenberg1974}, there is an integer $p\ge 0$ such that $r_i=\lim_{n\to\infty}h_{np+i}$ exists for $0\le i<p$. Moreover $r_i\in \K$ whenever $h_n\in \K$. This implies that $\delta_\mu(L)=\frac{1}{n}\sum_{0\le i<p}r_i$ exists and is in $\K$. Finally, we have $f_L(z)=p(z)/q(z)$ for two polynomials $p,q\in \R[z]$. Since $h_n\in \K$, we may assume $p,q\in \K[z]$ by \cite[Proposition 3.2]{Eilenberg1974}. Thus, if $\mu(L)=\sum_{n\ge 0}h_n$ is finite, the radius of convergence of $f_L(z)$ is $>1$ and the sum  is equal to $p(1)/q(1)$, which is in $\K$.
\end{proof}

\begin{example}
    Let $\mu$ be the Bernoulli measure on $\{a,b\}^\Z$ defined by $\mu(a)=p$, $\mu(b)=q$.  Let $L=\{a,b\}^*ab$. Then
    \[
        f_L(z)=\sum_{n\ge 0}pqz^{n+2}=\frac{pqz^2}{1-z}
    \]
    and consequently $\delta_\mu(L)=pq$.
\end{example}

%%
%% Bibliography
%%

\bibliography{probasProfinis.bib}

\appendix

\section{Semigroup theory}
\label{appendix-semigroups}

This appendix recalls some basic definitions and results concerning ideals in monoids. For a more detailed exposition, we refer to textbooks such as~\cite{BerstelPerrinReutenauer2009,book/Lallement1979,book/Grillet1995,book/Howie1995,Eilenberg1974,Eilenberg1976}. Most of the section focuses on Green's equivalence relations. These relations are used to produce a partition of a given monoid in terms of the principal ideals generated by its elements. 

A \emph{right ideal} in a monoid $M$ is a set $R$ such that $RM\subseteq R$. A \emph{left ideal} in a monoid $M$ is a set $L$ such that $ML\subseteq L$. A \emph{two-sided ideal} is a set $I$ such that $MIM\subseteq I$. For instance, for $m\in M$, the sets $mM$, $Mm$ and $MmM$ are respectively the smallest right, left and two-sided ideals containing $m$. The quotient of a monoid $M$ by an ideal $I$ is the monoid $M/I=(M\setminus I)\cup\{0\}$ with the operation
\begin{displaymath}
    mn=
    \begin{cases}
        mn & \text{if $mn\notin I$}\\
        0  & \text{otherwise}
    \end{cases}
\end{displaymath}

The Green relations of a monoid $M$ are the equivalence relations defined by
\begin{itemize}
    \item $m\RR n \iff mM=nM$, 
    \item $m\LL n \iff Mm=Mn$,
    \item $m\JJ n \iff MmM=MnM$.
\end{itemize}
In other words, two elements of $M$ are $\RR$-equivalent when they generate the same right ideal, $\LL$-equivalent when they generate the same left ideal, and $\JJ$-equivalent when they generate the same two-sided ideal. One also denotes by $\HH$ the equivalence $\RR\cap \LL$ and by $\DD$ the equivalence $\RR\LL=\LL\RR$. The equivalence classes of these relations are called the $\RR$-classes, $\LL$-classes, $\JJ$-classes, $\HH$-classes and $\DD$-classes respectively. Note that $\HH$ is contained in $\LL$ and $\RR$, which are both contained in $\DD$, which is contained in $\JJ$. 

Consider the quasi-orders defined by $m\leq_{\RR}n$ if $mM\subseteq nM$, and $m\leq_{\LL}n$ if $Mm\subseteq Mn$. A useful property of Green's relations in finite monoids is the following. We say that a monoid $M$ is \emph{stable} if for every $s,t\in M$ such that $s\JJ t$, the following implications hold:
\begin{equation*}
    s\geq_{\LL} t\implies s\LL t,\qquad s\geq_{\RR} t \implies s\RR t.
\end{equation*}
It is well-known that every finite monoid is stable~\cite[Lemma 1.1, Chapter V]{book/Grillet1995}. In stable monoids, $\JJ=\DD$, and thus when dealing with finite monoids we do not need to distinguish between $\JJ$ and $\DD$.

A $\DD$-class $D$ is called \emph{regular} if it contains an idempotent. Every $\HH$-class in $D$ containing an idempotent is a group and the groups corresponding to different $\HH$-classes contained in $D$ are all isomorphic. Moreover every submonoid of $M$ which is a group is contained in a regular $\HH$-class, thus the regular $\HH$-classes are also known as the \emph{maximal subgroups}.

When $M$ is a monoid of partial mappings from a set $Q$ to itself, the Green relations have natural interpretations. Let us adopt the convention that $M$ acts on the right of $Q$ (so the operation on $M$ is reversed composition) and let us define the \emph{kernel} and \emph{image} of an element $m\in M$ as 
\begin{equation*}
    \ker(m) = \{ (p,q)\in Q\times Q \mid pm = qm\},\qquad \Image(m) = \{qm \mid q\in Q\}.
\end{equation*}
Moreover, define the rank of $m$ as $\rank(m) = \Card(\Image(m))$. Then it is not hard to see that the following implications hold:
\[
    m\RR n\implies \ker(m)=\ker(n),\quad m \LL n \implies \Image(m)=\Image(n),\quad m\JJ n \implies \rank(m) = \rank(n).
\]
If $M$ is the monoid of all partial mappings on $Q$, then the reverse implications also hold.

The partition of a monoid using Green's relations is typically depicted using a so-called \emph{eggbox picture}. In an eggbox picture, the $\DD$-classes are represented by boxes where each row is an $\RR$-class and each column an $\LL$-class. The cells (where rows and columns intersect) thus represent the $\HH$-classes. It is customary to indicate the regular $\HH$-classes using an asterisk.

\begin{example}\label{ex:monoid-1}
    Let $M$ be the transition monoid of the automaton $\A$ in \cref{figureMonoidM}. The monoid $M$ has four $\DD$-classes represented in the eggbox picture. Two of them are the class of the identity $1=\varphi(\varepsilon)$ and the class of the empty map  $0=\varphi(a^3)=\varphi(b^3)$. In this example, each $\HH$-class has only one element.
    \begin{figure}
      \centering
      \begin{tabular}{c}
          \begin{tikzpicture}
              \node[state](3)at(0,0){$3$};
              \node[state](1)at(2,0){$1$};
              \node[state](2)at(4,0){$2$};

              \draw[->,above,bend left](1)edge node{$a$}(2);
              \draw[->,above,bend left](2)edge node{$b$}(1);
              \draw[->,above,bend left](1)edge node{$b$}(3);
              \draw[->,above,bend left](3)edge node{$a$}(1);
          \end{tikzpicture}
      \end{tabular}
      \quad
      \begin{tabular}{c}
          \begin{tikzpicture}

              \matrix[dclass] (j3){
                  1\\
              };
              \node[idempotent] at (j3-1-1.north east) {*};

              \matrix[dclass,below right=8pt of j3] (j2) {
                  $\alpha$ & $\alpha\beta$ \\
                  $\beta\alpha$ & $\beta$ \\
              };
          \drawgrid{2}{2}{j2}
              \node[label,anchor=south]at(j2-1-1.north){$1,2$};
              \node[label,anchor=south]at([yshift=-1pt]j2-1-2.north){$1,3$};
              \node[label,anchor=east]at(j2-1-1.west){$1,3$};
              \node[label,anchor=east]at(j2-2-1.west){$1,2$};
              \node[idempotent] at (j2-1-2.north east) {*};
              \node[idempotent] at (j2-2-1.north east) {*};

              \matrix[dclass,below right=8pt of j2,minimum width=25pt] (j1) {
                  $\alpha^2$ & $\alpha^2\beta$ & $\alpha^2\beta^2$ \\
                  $\beta\alpha^2$ & $\beta\alpha^2\beta$ & $\alpha\beta^2$ \\
                  $\beta^2\alpha^2$ & $\beta^2\alpha$ & $\beta^2$ \\
              };
          \drawgrid{3}{3}{j1}
              \node[idempotent] at (j1-1-3.north east) {*};
              \node[idempotent] at (j1-2-2.north east) {*};
              \node[idempotent] at (j1-3-1.north east) {*};
              \node[label,anchor=south]at(j1-1-1.north){$2$};
              \node[label,anchor=south]at(j1-1-2.north){$1$};
              \node[label,anchor=south]at(j1-1-3.north){$3$};
              \node[label,anchor=east]at(j1-1-1.west){$3$};
              \node[label,anchor=east]at(j1-2-1.west){$1$};
              \node[label,anchor=east]at(j1-3-1.west){$2$};

              \matrix[dclass,below right=8pt of j1] (j0){
                  0\\
              };
          \node[idempotent] at (j0-1-1.north east) {*};
      \end{tikzpicture}
  \end{tabular}
\caption{A finite automaton next to the eggbox picture of its transition monoid $M$, where $\alpha$ and $\beta$ denote respectively the image of $a$ and $b$ in $M$.}
    \label{figureMonoidM}
    \end{figure}
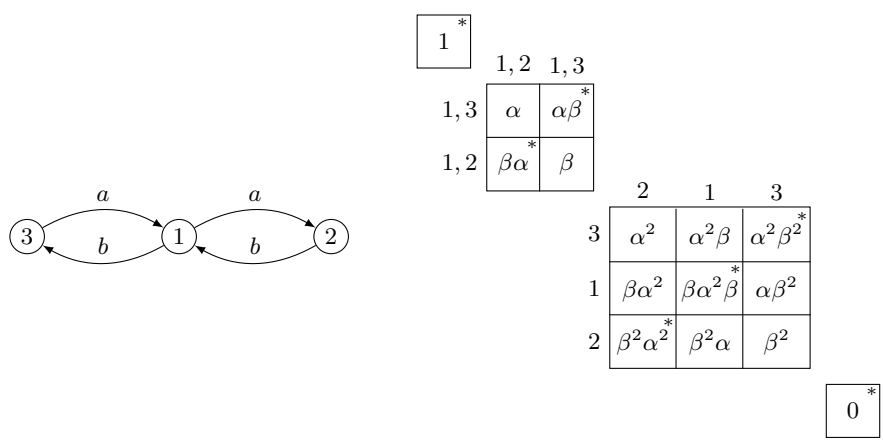
\end{example}

An ideal $I\subseteq M$ is \emph{minimal} if there is no ideal properly contained in it. Every finite monoid $M$ has a unique minimal ideal which is denoted $K(M)$. If $M$ has a zero element $0$, then $K(M)=0$. In that case we say that an ideal $I\ne\{0\}$ is \emph{$0$-minimal} if the only ideal properly contained in $I$ is $\{0\}$. A finite monoid always admits \emph{at least} one $0$-minimal ideal, but it not necessarily unique. In fact a finite monoid with zero admits a unique $0$-minimal ideal whenever it has the following property, known as \emph{primality}: for every $m,n\in M\setminus\{0\}$, there exists $u\in M$ such that $mun\ne 0$. In that case, the unique 0-minimal ideal is composed of a regular $\JJ$-class and zero \cite[Proposition 1.12.9]{BerstelPerrinReutenauer2009}.

\begin{example}\label{ex:monoid-2}
    Continuing with the monoid $M$ of \cref{ex:monoid-1}, the $0$-minimal ideal is the $\DD$-class of maps of rank $1$, which contains for instance $\varphi(a^2)$ and $\varphi(b^2)$. It is formed of $9$ elements which are maps of rank $1$ with range indicated by a label of the corresponding column and domain indicated by a label of the corresponding row. For example, $\varphi(ba^2b)=\varphi(ab^2a)$ (the element in the central cell) is the idempotent with source and range equal to $1$.
\end{example}

We finish by stating a fundamental result which is known as Green's lemma, which can be found in volumes on semigroup theory.
\begin{lemma}[Green's lemma] \label{lem:green}
    Let $r$ and $s$ be $\LL$-equivalent elements of a monoid $M$ and let $u$ and $v$ be such that $ur = s$ and $vs=r$. Then the mappings $x\mapsto ux$ and $y\mapsto vy$ define mutually inverse bijections between the $\RR$-classes of $r$ and $s$. Moreover those bijections preserve $\HH$-classes. 
\end{lemma}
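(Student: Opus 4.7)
The plan is to verify each claim directly from the definitions of Green's relations, relying on the two identities $vur = r$ and $uvs = s$, which follow immediately from the hypotheses $ur = s$ and $vs = r$.

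First I would show that left multiplication by $u$ sends the $\RR$-class of $r$ into the $\RR$-class of $s$. Fix $x$ with $x\RR r$; then there exist $p,q\in M$ with $xp = r$ and $rq = x$. From $s = ur = u(xp) = (ux)p$ I get $s\in (ux)M$, and from $ux = u(rq) = sq$ I get $ux \in sM$; hence $(ux)M = sM$, i.e.\ $ux\RR s$. By the symmetric argument, $y\mapsto vy$ sends the $\RR$-class of $s$ into the one of $r$. For the same $x$, I then compute $vux = v(urq) = (vur)q = rq = x$; symmetrically $uvy = y$ for every $y\RR s$. This shows that the two maps are mutually inverse bijections between the two $\RR$-classes.

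For the second claim I would establish that $x\LL ux$ for every $x\RR r$ (and symmetrically $y\LL vy$). The inclusion $M(ux)\subseteq Mx$ is immediate since $ux\in Mx$. The reverse inclusion follows from the inverse identity just proved: $x = vux \in M(ux)$. Hence $Mx = M(ux)$, so $ux\LL x$. Because the bijection $x\mapsto ux$ keeps each element inside its own $\LL$-class, and because $\HH$-classes are exactly the intersections of $\RR$- and $\LL$-classes, the bijection sends $\HH$-classes in $R_r$ onto $\HH$-classes in $R_s$.

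The proof is entirely elementary and there is no real obstacle; the only point to be mindful of is that the identity $vux = x$ must be checked for every $x$ in the $\RR$-class of $r$ and not merely for $r$ itself. Choosing the right $q$ with $rq = x$ reduces this to the single identity $vur = r$, obtained by composing the hypotheses.
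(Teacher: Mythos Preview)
Your proof is correct and is precisely the standard argument for Green's lemma. The paper itself does not give a proof of this statement; it merely states the lemma and refers the reader to standard semigroup theory textbooks, so there is nothing to compare against beyond noting that your argument is the classical one found in those references.
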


\end{document}